\documentclass{article}
\usepackage[english]{babel}
\usepackage[english]{babel}
\usepackage{amsthm}
\usepackage{mathtools}% Loads amsmath
\newtheorem{theorem}{Theorem}[section]
\newtheorem{lemma}[theorem]{Lemma}

\usepackage{amsmath}
\usepackage{graphicx}
\usepackage[colorlinks=true, allcolors=blue]{hyperref}
\usepackage{cleveref}
\usepackage{booktabs}
\usepackage{siunitx}
\usepackage{natbib}
\usepackage{listings}
\usepackage{subfig}
\usepackage{bm}
\usepackage{pdfpages}
\usepackage{import}
\usepackage{float}
\usepackage{lipsum}
\usepackage{caption}
\usepackage{amssymb}
\usepackage{amsfonts}
\usepackage{blindtext}
\usepackage{titlesec}
\usepackage[doublespacing]{setspace}
\usepackage[fontsize=12pt]{fontsize}
\usepackage[margin=1in]{geometry}
\usepackage{mathrsfs}
\usepackage{bbm}
\usepackage{multirow}
\usepackage{graphicx} % Required for inserting images

\title{\textbf{Covariate-localized False Discovery Rates}}
\author{
  Jonathan Lin \\
  Department of Statistical Science, Duke University, Durham, NC \\
  jonathan.lin@duke.edu
  \and
  Surya Tokdar \\
  Department of Statistical Science, Duke University, Durham, NC \\
  surya.tokdar@duke.edu
}
\date{\today}

\begin{document}
\def\spacingset#1{\renewcommand{\baselinestretch}%
{#1}\small\normalsize} \spacingset{1}
\maketitle

\begin{abstract}
    We introduce a flexible model for covariate-dependent multiple testing which can be encoded using a nonparametric Gaussian mixture model. Weight-localized predictive recursion (PRx), a new development in the methodology of Newton's predictive recursion algorithm, is then leveraged to estimate the components of this mixture model, allowing for recovery of the covariate-localized false discovery rate $\text{Pr}(H_i = 0|z_i,x_i)$ using a single, unified algorithm. This quantity represents the most direct extension of Efron's local false discovery rate to the covariate-dependent setting, and admits provable Bayesian FDR control properties under simple rejection rules. We introduce several procedures for estimating and thresholding the local false discovery rate, and show using various simulations and a real-data example that our procedures lead to increased power, tighter Bayesian FDR control, and more interpretable rejections. We furthermore show that this holds for fixed and randomized hypothesis labels, indicating that our proposed methods perform well under both frequentist and Bayesian interpretations of multiple testing.
\end{abstract}
{{\textbf{\textit{Keywords---}} multiple testing, two groups model, predictive recursion}
\spacingset{1.72}
\newpage
\section{Introduction} \label{sec:intro}
%The principal goal of most multiple testing procedures is to control the false discovery rate, which is the expected proportion of false discoveries among all rejections \citep{benjamini1995controlling}. Towards this end, simple omnibus procedures such as the Benjamini-Hochberg rule admit provable global FDR control properties. 
Ever since the seminal work of \cite{benjamini1995controlling}, statistical literature on multiple testing has focused on controlling the proportion of false discoveries among all rejections. The original Benjamini-Hochberg rule and its many variants admit provable control on the false discovery rate, which is the expected proportion of false discoveries across repeated sampling. An alternative approach is given by thresholding Efron's \textit{local false discovery rates} \citep{efron2005local}, $l_i = \text{Pr}(H_i = 0 |z_i)$. This latter quantity underscores a Bayesian interpretation of multiple testing in which $z$-values are modeled as i.i.d. from the two-groups mixture, $m(z) = \pi_0f_0(z) +(1-\pi_0)f_1(z)$, with the indicator $H_i\sim\text{Bernoulli}(1-\pi_0)$ denoting the signal label. Under this model, the expected false discovery proportion is referred to as the Bayesian FDR (BFDR)---a quantity distinct from frequentist FDR, which treats the hypothesis labels as fixed. 

The presence of covariate information, in which test statistics $z_i$ are accompanied by covariates $x_i$, may lead to a sharp degradation in performance for procedures agnostic to the covariate \citep{scott2015false, chao2021adapt, ignatiadis2021covariate, leung2022zap, zhang2022covariate}. While procedures such as Benjamini-Hochberg maintain error rate control, their use of a static, fixed-$z$ rejection threshold can lead to a substantial loss in power. In this setting the \textit{covariate-localized false discovery rate}, 
$\ell_i=\ell(z_i,x_i):=\text{Pr}(H_i = 0|z_i,x_i)$,
offers a more comprehensive quantification of the post-data plausibility that $H_i$ is non-null. Several existing approaches exploit covariate information closely related to the covariate-localized false discovery rate to design more powerful FDR-controlling rules \citep{scott2015false, leung2022zap, chao2021adapt}; relatively little work, however, focuses on consistent recovery of $\ell(z,x)$. 

%,  articles among the covariate-dependent multiple testing \citep{scott2015false, chao2021adapt, ignatiadis2021covariate, leung2022zap, zhang2022covariate}---with many works offering provable FDR control---few treat $\ell_i$ as the target quantity, and to the best of our knowledge none offer provable asymptotic recovery of it. 

We propose a covariate-dependent extension of Efron's two-groups model that allows both the null proportion $\pi_0(x)$ and alternative density $f_1(z|x)$ to vary with the covariate $x$. We model $f_1(z\mid x)$ using a nonparametric Gaussian location mixture so that the entire conditional two-groups model can be represented by a covariate-dependent mixing distribution with a point mass at zero capturing the null effect and an absolutely continuous component capturing the non-null effects. Weight-localized predictive recursion (PRx)---a kernelized extension of Newton's \textit{predictive recursion} algorithm \citep{newton2002nonparametric, lin2026fastsemiparametricdensityregression}---is then used to estimate this mixing distribution, alongside the implied conditional density $m(z|x)$ of the test statistics. In particular, a single PRx fit simultaneously yields estimates of ${\pi}_0(x)$ and ${m}(z|x)$, leading to a direct estimate of the entire covariate-localized false discovery rate surface, $\ell(z,x)=\frac{\pi_0(x)f_0(z)}{m(z\mid x)}$. 
Treating $\ell(z,x)$ itself as the inferential target provides a useful point of comparison with modern covariate-adaptive methods, which exploit closely related information but differ fundamentally in the role assigned to the fitted model.

PRx shares important features with several recent covariate-adaptive testing methods; among these, ZAP \citep{leung2022zap} and AdaPT-GMM \citep{chao2021adapt} are particularly close comparators. Like PRx, these procedures exploit interactions between the covariates and the direction or shape of the test-statistic distribution. However, the role assigned to the fitted model in achieving FDR control is fundamentally different. Both ZAP and AdaPT-GMM use a fitted working model to guide an adaptive testing rule; frequentist FDR control is protected by a separate calibration or masking mechanism and therefore does not require correct specification of that model. PRx instead treats recovery of the underlying conditional mixing density as a primary inferential goal; our BFDR control guarantees derive from the consistency of its estimate. The numerical comparisons below examine these two routes to covariate-adaptive testing: FDR calibration that is insulated from the fitted working model, versus direct recovery of the conditional mixture on which the testing rule is based.

With estimates of $\ell_i=\ell(z_i,x_i)$ in hand, we study two simple rejection procedures. The first applies the oracle cumulative local false discovery rate thresholding rule from \cite{sun2007oracle} directly to the estimated $\ell_i$'s. At the oracle level, this rule  is power-optimal among rejection sets satisfying a posterior FDR constraint. The second uses two-fold cross-fitting to separate estimation of $\ell$ from its evaluation at the tested observations. We show that, under consistency of the PRx estimator and a mild condition ensuring a nonvanishing number of discoveries, the cross-fitted procedure achieves asymptotic BFDR control. 

%We also give a pathwise interpretation of the oracle rule: under the random-effects model, posterior FDP control translates almost surely into asymptotic control of the realized FDP. This latter result is distinct from uniform frequentist FDR control over arbitrary fixed configurations of null and non-null hypotheses.

Simulation studies are used to compare PRx with several modern covariate-adaptive multiple testing procedures. Across settings in which the prevalence, direction, or distribution of the signals varies with the covariates, PRx is often substantially more powerful while maintaining a mean false discovery proportion close to the nominal level. We additionally consider simulations with fixed hypothesis labels, which deliberately fall outside the random-effects assumptions used in our theory and therefore provide an empirical assessment of robustness to that form of model departure. Finally, we revisit the neural synchrony application of \cite{scott2015false}, where using PRx leads to a substantially different and scientifically interpretable set of discoveries.

\section{Methodology}\label{sec:methods}
Inspired by \cite{martin2012nonparametric}, we adopt a variant of Efron's two-groups model taking the form of a covariate-dependent spike-and-slab mixture. Let $x_i\sim \Pi$ i.i.d. and define $\pi_0(x):\mathcal{X}\rightarrow[0,1]$ to be a function that maps covariate values into the unit interval. Let the signal indicator be denoted by $H_i\sim \text{Bernoulli}(1-\pi_0(x_i))$, and let $z_i|x_i,H_i=0 \sim N(z;\theta_0, \sigma_0^2)$, $z_i|x_i, H_i=1\sim f_1(z|x) = \int N(z;\theta_0 + u, \sigma_0^2)\psi(u|x)\mu(du)$ for a nonparametric mixing density $\psi(u|x)$ supported on $\mathcal{U}$. This location-mixture formulation encodes Efron's \textit{zero assumption} by forcing the alternative density to have heavier tails than the null. We can write the density of $z|x$ as: $m(z|x)= \int N(z;\theta_0 + u, \sigma_0^2)\Psi(u|x)\mu(du)$, where $\Psi(u|x) = \pi_0(x)\delta_{0} + (1-\pi_0(x))\psi(u|x)$, $u\in[-C,C]$. From this we may express the \textit{covariate-localized false discovery rate} $\ell_i$ as:
\begin{equation}
    \begin{split}
        \ell_i = \text{Pr}(H_i = 0 |z_i,x_i) = \frac{\pi_0(x_i)N(z_i;\theta_0,\sigma_0^2)}{m(z_i|x_i)} = \frac{\Psi(\{0\}|x_i) N(z_i;\theta_0,\sigma_0^2)}{\int_\mathcal{U} N(z_i|\theta_0+u,\sigma_0^2) \Psi(u|x_i)\mu(du)}
    \end{split}
\end{equation}
Arguably, $\ell_i$ presents the most direct extension of Efron's local false discovery rate to the covariate-dependent setting. The goal of the following methodology is twofold. Firstly, given a stream of data $(z_i,x_i)$, we must determine a principled means of estimating $\ell_i$, which calls for an estimate of both $\pi_0(x)$ and $m(z|x)$. Secondly, after using these quantities to estimate $\ell_i$, we must determine a rejection rule using $\{\hat{\ell}_i\}_{i=1}^n$ that somehow controls the BFDR.

\subsection{Estimation of the covariate-localized false discovery rate}
\label{sec:lfdr-est}
%The first goal is handled by the marginal characterization of the covariate-dependent two-groups model. 
Since $N(z;\theta_0+u,\sigma_0^2)$ is a known parametric kernel, recovery of $\ell_i$ reduces to estimating the nonparametric mixing density, $\Psi$, which is dominated by the measure that assigns a point mass to $\{0\}$ and Lebesgue measure to the compact set $[-C,C]$. This can be pursued using PRx, which is designed for fast nonparametric recovery of a conditional mixing density. The algorithm is as follows: initialize a mixing density as $\hat{\Psi}_0(u|x) = \hat{\pi}_{0,0}(x)\delta_0+(1-\hat{\pi}_{0,0}(x))\hat{\psi}_0(u|x)$. For fixed $b_j$ and a stream of covariates $\{x_i\}_{i=1}^n$, $x_i = (x_{i,1},\ldots,x_{i,p})\in \mathbb{R}^p$, let $\beta_i(x^*) = \exp(-\sum_{j=1}^p b_j(x_{i,j} - x^*_j)^2 )$ and $v_i(x)=\beta_i(x) h(S_i(x))$, where $S_i(x) = \sum_{j=1}^i \beta_j(x)$ and $h(z)\asymp z^{-\gamma}$ for $\gamma\in(1/2,1]$. Then we iterate forwards according to the recursion:
\begin{equation}
    \begin{split}
        \hat{\Psi}_n(u|x) = (1-v_n(x))\hat{\Psi}_{n-1}(u|x)+ v_n(x)\frac{N(z_n;\theta_0+u,\sigma_0^2) \hat{\Psi}_{n-1}(u|x)}{\int_{\mathcal{U}} N(z_n;\theta_0+u,\sigma_0^2)\hat{\Psi}_{n-1}(u|x)\mu(du)}
    \end{split}
\end{equation}
Under mild regularity assumptions (see Section \ref{sec:theory}) this recursion leads to weak consistency $\hat{\Psi}_n\xrightarrow{w}\Psi$ regardless of the choice of dominating measure $\mu$. To eliminate order-dependency, one can repeat this process over several re-orderings of the data and then calculate a permutation-averaged estimate, which preserves consistency. After obtaining an estimate $\hat{\Psi}_{n}$ of the mixing density, we can easily extract $\hat{m}_{n}(z|x) = \int N(z;\theta_0 + u, \sigma_0^2)\hat{\Psi}_{n}(u|x)\mu(du)$ via quadrature. Estimation of the local false discovery rate $\text{Pr}(H_i = 0|z_i,x_i)$ can then be pursued in a variety of ways. We outline some options below.

\begin{itemize}
    \item (Naive PRx) Define the local false discovery rate estimator: $\hat{\ell}_n(z,x) = \frac{\hat{\Psi}_{n}(\{0\}|x) f_0(z)}{\hat{m}_{n}(z|x)}$ 
    \item (Hole-adjusted PRx) For some small $h>0$, let $\hat{\psi}_0(u|x)=0$ for $u\in[-h,h]$ and $\hat{\psi}_0(u|x)>0$ everywhere else. Let $\hat{\ell}_n(z,x) = \frac{\hat{\Psi}_{n}(\{0\}|x)f_0(z)}{\hat{m}_{n}(z|x)}$.
    \item (Conservative PRx) For any small $\eta> 0$ let $\hat{\ell}_n(z,x) = \min\left\{1,\frac{\hat{\Psi}_{n}([-\eta,\eta]|x)f_0(z)}{\hat{m}_{n}(z|x)}\right\}$. 
\end{itemize}

In Section \ref{sec:theory} we show that, under the assumption that the true continuous component satisfies $\psi(u|x) = 0 $ for $u\in[-\delta,\delta]$ for some $\delta\geq h$, hole-adjusted PRx is $L^p$-consistent towards the oracle target, $\ell(z,x)$. Conservative PRx asymptotically overestimates the oracle local false discovery rate, and the accrued error decreases as $\eta$ decreases; no separation assumption is necessary for this. Moreover, under separation of $\delta > \eta$, conservative PRx is exactly consistent. Naive PRx is the most straightforward between the three, is closely approximated by the other two, and performs very well empirically, but as of yet lacks the theoretical backing to suggest good asymptotic behavior. For additional details of the weight-localized predictive recursion algorithm, see \cite{lin2026fastsemiparametricdensityregression}. 

%To be concrete, let us define $\Psi_0$ as an initial ``guess.'' Weight-localized predictive recursion (PRx) then proceeds by updating according to the recursion, $m_{i-1}(z|x) = \int N(\theta_0+u,\sigma_0^2)\Psi_{i-1}(u|x)\mu(du)$, followed by $\Psi_i (z|x)= (1-v_i(x))\Psi_{i-1}(z|x) + v_i(x) \cdot \phi(z;\theta_0+u, \sigma_0^2)\Psi_{i-1}(z|x)/m_{i-1}(z|x)$. The weights $v_i(x)$ are defined as $v_i(x) = \beta_i(x)h(S_i(x))$, where $\beta_i(x) = \exp(-b\lVert  x-x_i\rVert^2)$ and $S_i(x) = \sum_{t=1}^i \beta_t(x)$. $h(z)$ is any function satisfying $h(z) \asymp z^{-\gamma}$ for $\gamma\in (1/2,1]$. We take $h(z) = (1+z)^{-2/3}$. The PRx weights $v_i(x)$ essentially encode the ``influence'' of a data point with covariate $x_i$ to the target conditional density at evaluation point $x$, compounded with the influence of previous data points that are decaying at the order of $z^{-\gamma}$. After running this recursion, our estimate is given simply by $\Psi_n(u|x)$. Following this, $\hat{\pi}_0(x) = \Psi_n(0|x)$, and $\hat{f}(z|x)$ can be computed using off-the-shelf numerical integration methods, such as quadrature. $\ell_i$ may then be estimated simply by plugging in these quantities.

There are three principal advantages of using PRx to estimate the local false discovery rate. Firstly, PRx offers a single, self-contained procedure for jointly estimating all components of $\ell_i$; recovery of $\hat{\Psi}_n(\cdot|x)$ implies recovery of $\hat{\pi}_{0,n}(x)$ and $\hat{m}_n(z|x)$ simultaneously, with no need for additional regression or EM steps. Secondly, PRx allows us to encode $f_1(z|x)$ as a nonparametric conditional Gaussian mixture, yielding a rich class of smooth alternative densities; it is well-known that any absolutely continuous density can be approximated by a continuous Gaussian mixture. PRx also admits as a byproduct a means of estimating a likelihood score function --- the PRMLx, $\prod_{i=1}^n \hat{m}_{i-1}(z_i|x_i)$ --- which may be maximized both for tuning its localization bandwidth, and for estimating unmixed parameters \citep{martin2012nonparametric, lin2026fastsemiparametricdensityregression}. For example, an optimal value for the localization bandwidth can be obtained by solving $\hat{b} = \text{arg max}_b \prod_{i=1}^n \hat{m}_{i-1}(z_i|x_i)$, and using PRMLx maximization to estimate optimal values for $\theta_0$ and $\sigma_0$ amounts to recovery of an \textit{empirical null} distribution. The empirical null is commonly used to adjust for violations of certain testing assumptions \citep{efron2004large}, or for the presence of an inflated central bulk in the data \citep{xiang2024interpretation}. See Section \ref{sec:neural} for an example.

These benefits distinguish PRx from existing covariate-adaptive multiple testing procedures. \cite{scott2015false} attempts to recover the $\ell(z,x)$ surface by first using ordinary predictive recursion to recover the alternative density of the $z$-values, and then plugging it into a separate EM algorithm for estimation of $\pi_0(x)$. PRx instead estimates both components jointly and permits the alternative distribution to vary with $x$. Procedures based on two-sided p-values such as IHW-BH and CAMT \citep{ignatiadis2021covariate,zhang2022covariate} may discard useful information when there is high correlation between $x$ and $\mathrm{sign}(z)$. ZAP and AdaPT-GMM address this issue more directly. ZAP \citep{leung2022zap} transforms the signed $z$-values and constructs a covariate-dependent assessor designed to mimic a useful local-FDR ordering, while AdaPT-GMM \citep{chao2021adapt} fits the $z$-values using a finite Gaussian mixture model with covariate-dependent weights. In both cases, however, the fitted model is primarily an efficiency device: their FDR guarantees are protected by a separate masking or calibration argument and do not require the working model to estimate the true local FDR consistently. PRx instead places greater inferential weight on recovering $\ell(z,x)$ itself.

%The work of \cite{chao2021adapt} and \cite{leung2022zap} remedy this in different ways; the latter method, for example, uses the transform $u_i = \Phi(z_i)$, and then models these $u$-values using a beta mixture. However, the alternative density of these $u$-values is modeled restrictively, with a static, user-specified shape parameter. \cite{chao2021adapt} fit the $z$-values directly to a discrete, $K$-component Gaussian mixture, $\sum_{k=1}^K w_k(x) N(z;\mu_k,\sigma_k^2)$---while certainly flexible, under the two-groups interpretation of multiple testing this leads to a loss of identifiability in the model parameters $\mu_k$ and $\sigma_k$ \citep{martin2012nonparametric}.

\subsection{Thresholding rules}
\label{sec:threshold}
After choosing a means with which to estimate the covariate-localized false discovery rates, we require a rejection rule with desirable error rate control properties. Let $\hat{\ell}_i:=\hat{\ell}_n(z_i,x_i)$. We use ``posterior FDR'' for the conditional quantity $\mathbb{E}[\mathrm{FDP}\mid\mathcal D_n]$, and ``BFDR'' for its repeated-sampling expectation under our covariate-dependent two-groups model. We propose two rejection rules which are outlined below:
\begin{itemize}
    \item (SC Thresholding Rule) Calculate $\{\hat{\ell}_{i}\}_{i=1}^n$ using the chosen PRx method and order them as $\hat{\ell}_{(1)}\leq\cdots\leq\hat{\ell}_{(n)}$, using a fixed tie-breaking rule if necessary. First calculate $k^*:=\max\{k:\frac{1}{k}\sum_{i=1}^k \ell_{(i)}\leq \alpha\}$, and then reject all hypotheses for which $\ell_i\leq \ell_{(k^*)}$.
    \item (Two-fold Thresholding Rule) Partition the indices $\{1,\ldots,n\}$ into two disjoint folds $I_1$ and $I_2$, both of which are fixed or chosen independently of the data with $\lim_{n\rightarrow\infty}|I_1|/n>0$ and $\lim_{n\rightarrow\infty}|I_2|/n>0$. Run the chosen PRx method on $(z_i,x_i)_{i\in I_1}$ to obtain $\hat{\ell}^{(1)}(z,x)$, and on $(z_i,x_i)_{i\in I_2}$ to obtain $\hat{\ell}^{(2)}(z,x)$. Then, calculate $\hat{\ell}^{(1)}(z_i,x_i);i\in I_2$ and $\hat{\ell}^{(2)}(z_i,x_i);i\in I_1$, so we have $n$ total local false discovery rate estimates. Apply the SC thresholding rule to this collection. 
\end{itemize}

The SC thresholding rule is simpler, does not rely on randomness beyond that which is present in the data, and approximates the rule that achieves power optimality among procedures that threshold the posterior FDR. The two-fold thresholding rule, on the other hand, can be shown to asymptotically control for the BFDR without relying on knowledge of an oracle target. As will be seen in Section \ref{sec:sims}, both rules exhibit similar performance across simulation studies. 

Both thresholding rules proposed above follow from the power-optimal oracle procedure of \cite{sun2007oracle}. The SC thresholding rule is identical to the oracle rule, except it replaces $\ell_i$ with the estimated quantity, $\hat{\ell}_i$. The two-fold thresholding rule employs a more careful data-splitting procedure to resolve the fact that the data pair $(z_i,x_i)$ appears twice in $\hat{\ell}_n(z_i,x_i)$: as both an evaluation point, and within the PRx recursion internal to $\hat{\ell}_n(\cdot,\cdot)$. In order to achieve provable BFDR control it becomes necessary to separate the data used in the PRx recursion from the evaluation point of the local false discovery rate estimate; data-splitting is one simple means of achieving this.  

\section{Case Studies}\label{sec:casestudies}
Here we present applications of PRx multiple testing to several simulations and a real-data example. Our goal in presenting these is to emphasize the benefit of learning the $\mathcal{X}$-dependency in the model. In particular we will show that PRx can capture shifts in the frequency of signals, as well as in their magnitude and sign, as they vary over the covariate space. Throughout the simulations we will use two variants of PRx:
\begin{itemize}
    \item (HA-PRx) Hole-adjusted PRx $(h = 0.01)$, with $\hat{\pi}_{0,0}(x) = 0.6$ and $\hat{\psi}_0(u|x)$ initialized as a uniform over $[-10, -h]\cup[h,10]$. The SC thresholding rule is used. We use PRMLx maximization to obtain an optimal localization bandwidth, $\hat{b}$.
    \item (Two-fold PRx) Hole-adjusted PRx $(h = 0.01)$, with $\hat{\pi}_{0,0}(x) = 0.6$ and $\hat{\psi}_0(u|x)$ initialized as a uniform over $[-10, -h]\cup[h,10]$. The two-fold thresholding rule is used.  PRMLx maximization is run twice---once on each fold---to obtain optimal localization bandwidths $\hat{b}$ for each fold. 
\end{itemize}
For both HA-PRx and two-fold PRx, the L-BFGS-B algorithm is used to carry out PRMLx maximization. These are then compared with predictive recursion expectation maximization (PREM) \citep{scott2015false}, adaptive thresholding (AdaPT-GMM) \citep{chao2021adapt}, cross-weighted multiple testing (IHW-BH) \citep{ignatiadis2021covariate}, $z$-value adaptive thresholding (ZAP) \citep{leung2022zap}, covariate-adaptive multiple testing (CAMT) \citep{zhang2022covariate}, Efron's simple two-groups procedure (Efron 2-grp) \citep{efron2005local}, and the naive Benjamini-Hochberg procedure (BH) \citep{benjamini1995controlling}. For the simulations we additionally apply the oracle rule, which calculates the true value of $\text{Pr}(H_i = 0|z_i,x_i)$ for each $(z_i,x_i)$ pair, and then applies the SC rule to them. It can be shown that naive, hole-adjusted, and conservative PRx lead to identical results for small values of $h$ and $\eta$; for this reason we forego using naive and conservative PRx in the following simulations, but include them in the supplementary.

\subsection{Simulation Studies}\label{sec:sims}
We consider three simulation studies, all of which use a simple covariate $x_i\sim \text{Uniform}[0,1]$ and the theoretical null, $z_i|x_i,H_i=0 \sim N(0,1)$. For the first simulation study, let $\pi_0(x) = 0.8 - 0.5\sin(\pi x_i)^2$, and let $z_i|x_i, H_i = 1 \sim N(-3,2)\mathbbm{1}\{x_i\leq 0.5\} + N(3,2)\mathbbm{1}\{x_i>0.5\}$. For the second simulation, let $\pi_0(x)=(1 + e^{4x+2})^{-1}$, with $z_i|x_i,H_i=1\sim N(\theta_i, 1)$, $\theta_i|x_i\sim N(\mu(x_i), 2)$ for $\mu(x) = -3+3x$ for $x\in[0,1/2]$ and $\mu(x) = 3x$ for $x\in(1/2,1]$. For the third simulation study, we let $\pi_0(x) = (1+e^{2-4x})^{-1}$ and let $z_i|x_i,H_i=1\sim N(\theta_i,1)$ with $\theta_i|x_i\sim N(\mu(x_i),1)$, with $\mu(x) = 3.5-x$.  We run our methods at $\alpha=0.1$ over $30$ replicates of the data, each with $n=1000$ hypotheses, and calculate the mean of the false discovery proportion and empirical power. The results are displayed in Table \ref{tab:simulation-bayesian}, alongside visualizations of the simulated data in Figure \ref{fig:testing}

\begin{figure}[ht]
\centering
\begin{minipage}{\textwidth}
  \centering
  \includegraphics[width=0.9\textwidth]{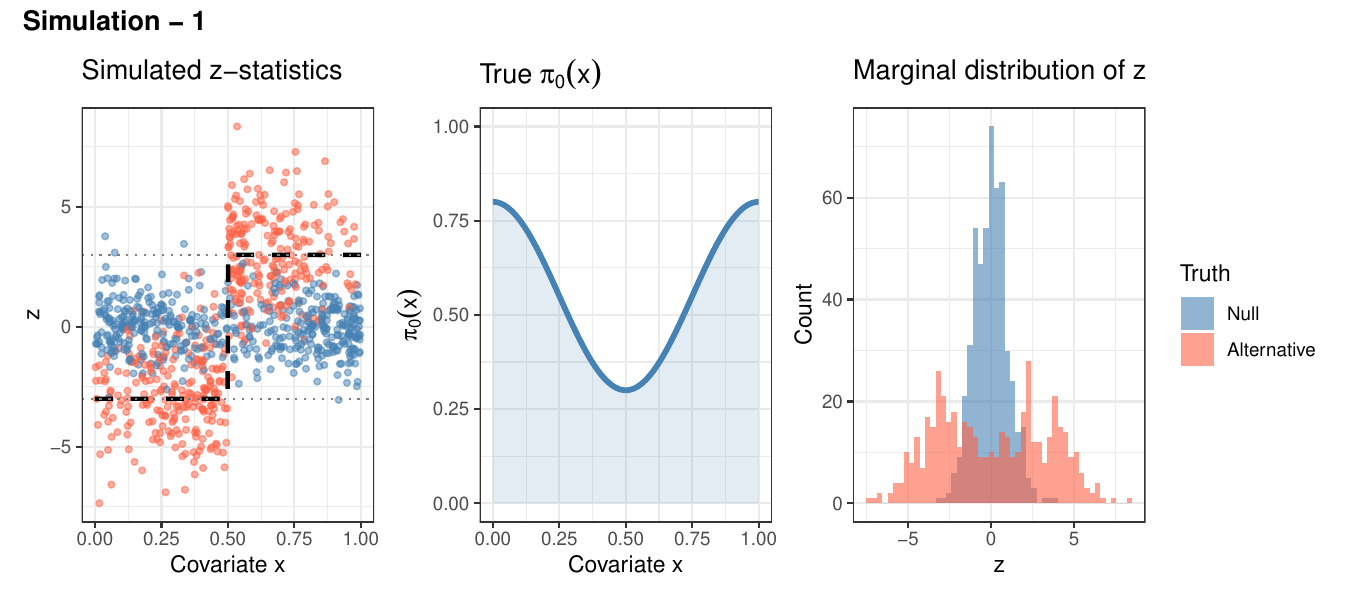}\\
\end{minipage}
\begin{minipage}{\textwidth}
  \centering
  \includegraphics[width=0.9\textwidth]{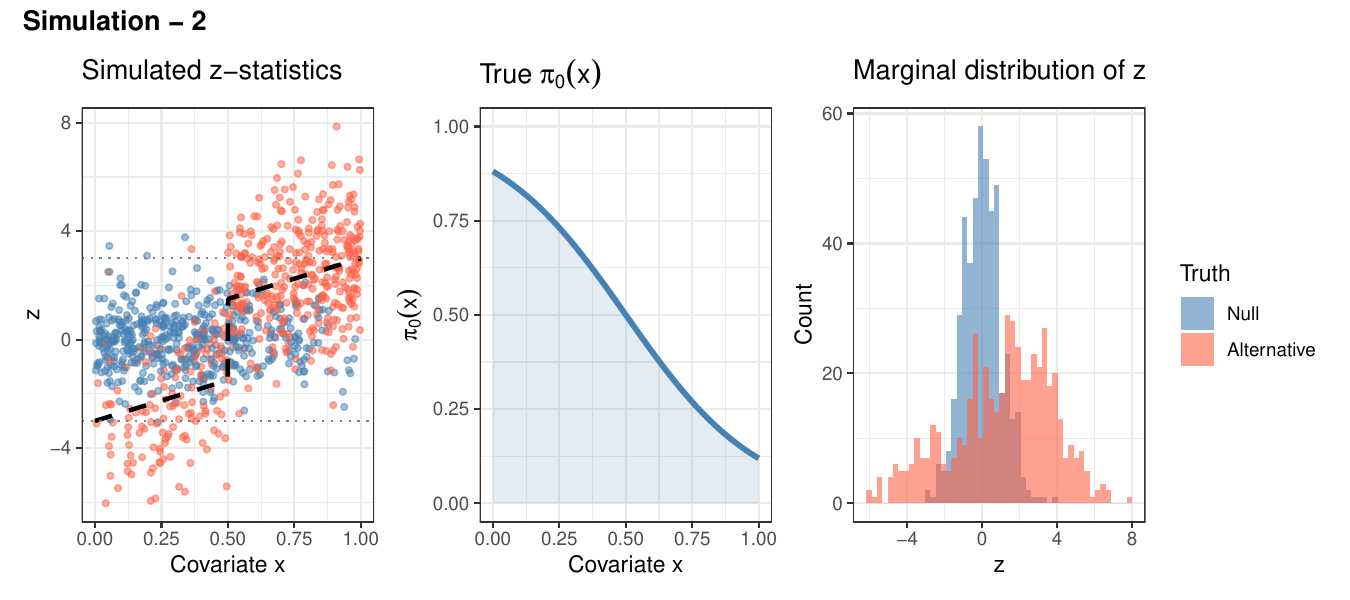}\\
\end{minipage}
\begin{minipage}{\textwidth}
  \centering
  \includegraphics[width=0.9\textwidth]{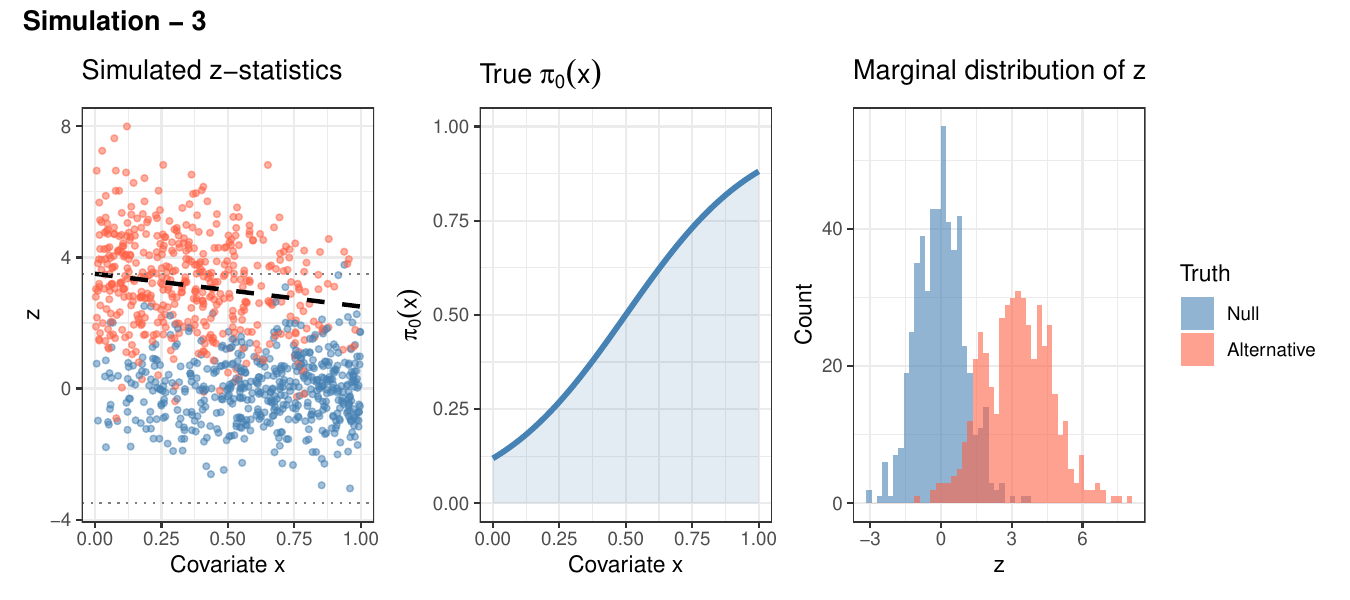}\\
\end{minipage}
\caption{Each row displays a simulation setting. For each row, the left panel displays a scatterplot of the observed data, $(x_i,z_i)_{i=1}^{1000}$ with the dashed line denoting the signal mean as it varies over the covariate space. The middle panel displays the curve $\pi_0(x)$, and the right panel displays a histogram of the observed $z$-values. }
\label{fig:testing}
\end{figure}

\begin{table}[ht]
\centering

\begin{tabular}{lccc}
\hline
Simulation 1 Method & Mean FDP (SD) & Mean Power (SD) & $n_{\mathrm{rejected}}$ \\
\hline
HA-PRx          & 0.10 (0.01) & 0.83 (0.02) & 412.1 \\
Two-fold PRx    & 0.10 (0.01) & 0.82 (0.02) & 405.6 \\
PREM            & 0.11 (0.02) & 0.78 (0.03) & 394.1 \\
AdaPT-GMM           & 0.09 (0.02) & 0.69 (0.03) & 337.9 \\
ZAP             & 0.09 (0.02) & 0.76 (0.03) & 374.9 \\
CAMT            & 0.09 (0.02) & 0.74 (0.03) & 366.9 \\
IHW-BH          & 0.07 (0.02) & 0.71 (0.03) & 346.0 \\
Efron-2grp      & 0.10 (0.02) & 0.75 (0.03) & 375.7 \\
BH              & 0.06 (0.01) & 0.70 (0.03) & 331.1 \\
Oracle          & 0.10 (0.01) & 0.84 (0.02) & 418.5 \\
\hline
\end{tabular}
\begin{tabular}{lccc}
\hline
Simulation 2 Method & Mean FDP (SD) & Mean Power (SD) & $n_{\mathrm{rejected}}$ \\
\hline
HA-PRx          & 0.08 (0.01) & 0.70 (0.02) & 376.5 \\
Two-fold PRx    & 0.07 (0.01) & 0.69 (0.03) & 370.4 \\
PREM            & 0.12 (0.03) & 0.69 (0.03) & 390.9 \\
AdaPT-GMM          & 0.08 (0.02) & 0.56 (0.02) & 306.5 \\
ZAP             & 0.07 (0.01) & 0.68 (0.02) & 364.4 \\
CAMT            & 0.06 (0.02) & 0.61 (0.03) & 322.6 \\
IHW-BH          & 0.05 (0.01) & 0.56 (0.03) & 295.2 \\
Efron-2grp      & 0.07 (0.02) & 0.57 (0.03) & 308.2 \\
BH              & 0.05 (0.01) & 0.51 (0.03) & 269.9 \\
Oracle          & 0.10 (0.01) & 0.74 (0.02) & 411.2 \\
\hline
\end{tabular}
\begin{tabular}{lccc}
\hline
Simulation 3 Method & Mean FDP (SD) & Mean Power (SD) & $n_{\mathrm{rejected}}$ \\
\hline
HA-PRx          & 0.10 (0.02) & 0.92 (0.01) & 513.2 \\
Two-fold PRx    & 0.10 (0.02) & 0.92 (0.01) & 511.9 \\
PREM            & 0.12 (0.02) & 0.92 (0.01) & 526.0 \\
AdaPT-GMM          & 0.10 (0.01) & 0.87 (0.01) & 480.8 \\
ZAP             & 0.10 (0.02) & 0.92 (0.02) & 506.4 \\
CAMT            & 0.09 (0.02) & 0.85 (0.02) & 466.3 \\
IHW-BH          & 0.06 (0.01) & 0.82 (0.02) & 434.5 \\
Efron-2grp      & 0.05 (0.01) & 0.84 (0.02) & 444.7 \\
BH              & 0.05 (0.01) & 0.78 (0.02) & 408.7 \\
Oracle          & 0.10 (0.01) & 0.92 (0.01) & 509.8 \\
\hline
\end{tabular}

\caption{Simulation results --- the mean FDP, mean power, and mean rejections are calculated over thirty replicates of the data. Parentheses display standard deviations.}
\label{tab:simulation-bayesian}
\end{table}

In the first three simulations HA-PRx and two-fold PRx are competitive with the oracle and generally attain higher power than the non-oracle rules while maintaining BFDR control. In particular, two-fold PRx maintains power that is slightly less than that of HA-PRx, but this comes at the added benefit of thresholding BFDR more conservatively. We note that their average false discovery proportions tend to be greater than that of competing methods while remaining thresholded at our target level $\alpha=0.1$. This suggests that learning covariate information is pushing PRx to spend more of its BFDR budget, which is a desirable property since a power-optimal rule would attain $\text{BFDR}\approx\alpha$. A notable aspect of the PREM rejection rule is that it seems to consistently over-reject hypotheses, attaining an average false discovery proportion higher than $0.1$, but this is not accompanied by a significant increase in power. The reason for this can be seen in Figure \ref{fig: altdensity-bayesian}; notice that, while being able to capture shifts in the null proportion, methods such as PREM or Efron's 2-group procedure rely on a fixed, bimodal alternative density estimate that does not adapt with $x$. Methods such as IHW-BH or CAMT, which rely on two-sided $p$-values, are similarly disadvantaged through a loss of information in discarding the sign of $z$. PRx, on the other hand, captures the covariate-dependency in the alternative density, and is thus able to recover the directionality of the $z$-values as they vary over the covariate space. This is especially important given the setup of Simulations 1 and 2, which concentrates alternative test statistics at negative values for $x<0.5$ and at positive values for $x>0.5$; by ignoring this $x$-dependency, many competing procedures end up spuriously rejecting test statistics that are opposite to where the alternative density would place them.

\begin{figure}[ht]
\centering
\begin{minipage}{\textwidth} 
  \centering
  \includegraphics[width=0.8\textwidth]{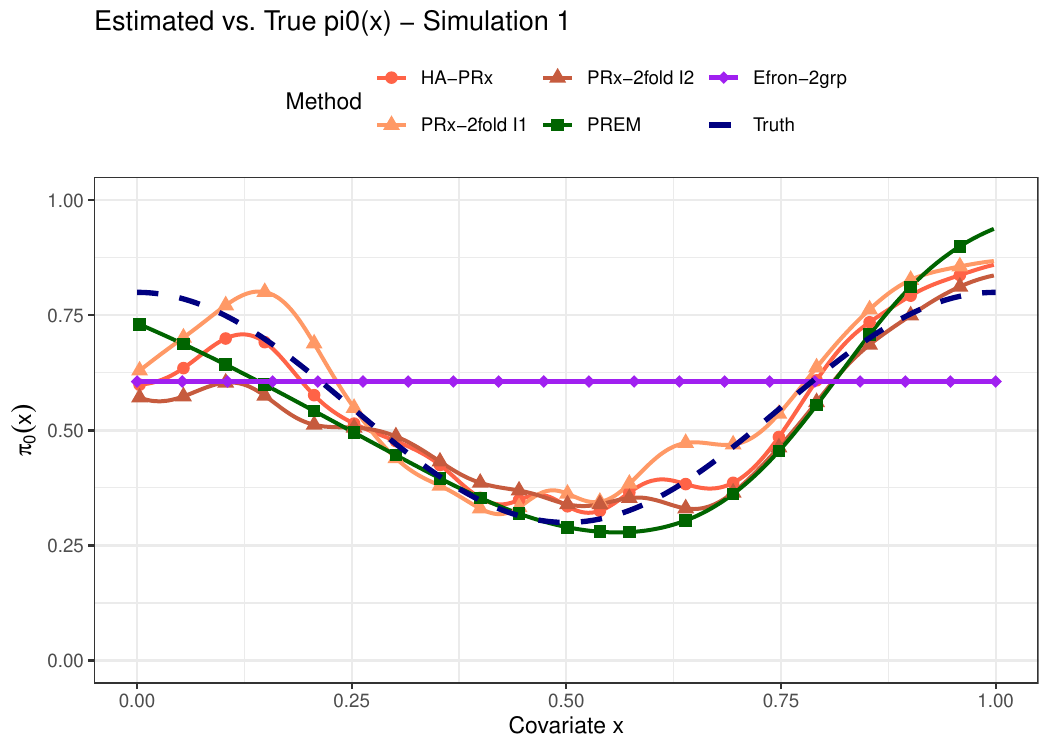}\\
\end{minipage}
\begin{minipage}{\textwidth} 
  \centering
  \includegraphics[width=0.8\textwidth]{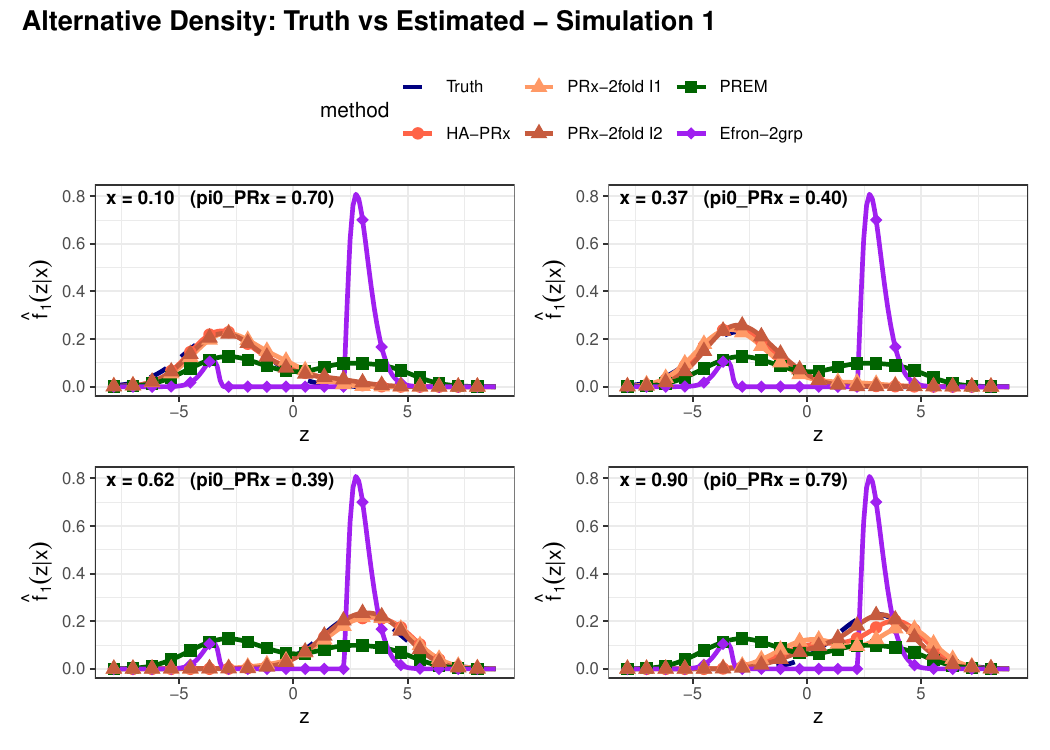}\\
\end{minipage}
\caption{Top panel displays the true underlying null proportion, $\pi_0(x)$ (dashed line), plotted against estimates $\hat{\pi}_0(x)$ given by HA-PRx (circle), fold 1 and fold 2 of two-fold PRx (triangles), PREM (square), and Efron's 2-group procedure (diamond). Bottom panel displays alternative density estimates. This is for simulation 1 under the Bayesian scheme. PRx-based methods admit a covariate-adaptive alternative density, while the others give only an $x$-agnostic density estimate.}
\label{fig: altdensity-bayesian}
\end{figure}

It should be noted that in the above simulation study, each of the $30$ replicates re-simulated both the latent signal indicators $H_i|x_i$ as well as the data $z_i|x_i$. This follows a Bayesian interpretation of FDR control, in which the indicators $H_i$ are treated as random quantities that share some common conditional distribution. In this sense we may think of the $\text{Bernoulli}(1-\pi_0(x))$ distribution as a ``conditional prior'' on the latent variables $H_i$, which, in the case of the simulation, we take to be a real mechanism of the data-generating process. A skeptical frequentist might disagree with the specification of a particular prior, and would instead be interested in controlling the false discovery rate while fixing an arbitrary sequence $\{H_i\}_{i=1}^n$ of hypotheses. We investigate this with a secondary set of simulations. In these frequentist simulations, we begin by fixing a $0-1$ propensity function, $\pi_0(x) = \mathbbm{1}\{x\in[0,0.1]\cup[0.25,0.45]\cup[0.55, 0.8]\cup [0.9,1]\}$ and generating a single sequence of $n$ covariates, $\{x_i\}_{i=1}^n $. We run $B=30$ replicates, each with $n=1000$ hypotheses, at the control level $\alpha = 0.1$. In each replicate we re-simulate $z$-values according to the appropriate null and alternative hypotheses: $z_i|x_i,H_i=0 \sim N(0,1)$ and $z_i|x_i,H_i=1\sim f_1(\cdot|x_i)$. We use the alternative densities of the earlier simulations for this. The results are presented below, in Table \ref{tab:simulation-frequentist}.

\begin{table}[ht]
\centering

\begin{tabular}{lccc}
\hline
Simulation 1 Method & Mean FDP (SD) & Mean Power (SD) & $n_{\mathrm{rejected}}$ \\
\hline
HA-PRx            & 0.07 (0.02) & 0.95 (0.02) & 352.6 \\
Two-fold PRx      & 0.05 (0.02) & 0.94 (0.02) & 343.2 \\
PREM              & 0.11 (0.02) & 0.73 (0.03) & 282.6 \\
AdaPT-GMM            & 0.10 (0.02) & 0.72 (0.04) & 275.6 \\
ZAP               & 0.10 (0.02) & 0.74 (0.04) & 284.7 \\
CAMT              & 0.09 (0.02) & 0.71 (0.03) & 272.5 \\
IHW-BH            & 0.08 (0.02) & 0.58 (0.05) & 215.7 \\
Efron-2grp        & 0.10 (0.02) & 0.72 (0.03) & 275.5 \\
BH                & 0.07 (0.02) & 0.68 (0.03) & 249.4 \\
Oracle            & 0.10 (0.00) & 1.00 (0.00) & 383.0 \\
\hline
\end{tabular}
\begin{tabular}{lccc}
\hline
Simulation 2 Method & Mean FDP (SD) & Mean Power (SD) & $n_{\mathrm{rejected}}$ \\
\hline
HA-PRx            & 0.03 (0.01) & 0.81 (0.04) & 288.9 \\
Two-fold PRx      & 0.03 (0.01) & 0.78 (0.04) & 276.9 \\
PREM              & 0.11 (0.02) & 0.53 (0.03) & 204.1 \\
AdaPT-GMM             & 0.09 (0.02) & 0.51 (0.04) & 193.9 \\
ZAP               & 0.09 (0.02) & 0.56 (0.04) & 214.7 \\
CAMT              & 0.09 (0.02) & 0.52 (0.04) & 195.9 \\
IHW-BH            & 0.06 (0.02) & 0.45 (0.03) & 165.7 \\
Efron-2grp        & 0.08 (0.02) & 0.51 (0.03) & 192.5 \\
BH                & 0.06 (0.02) & 0.48 (0.03) & 174.6 \\
Oracle            & 0.10 (0.00) & 1.00 (0.00) & 383.0 \\
\hline
\end{tabular}
\begin{tabular}{lccc}
\hline
Simulation 3 Method & Mean FDP (SD) & Mean Power (SD) & $n_{\mathrm{rejected}}$ \\
\hline
HA-PRx            & 0.10 (0.01) & 0.99 (0.01) & 379.1 \\
Two-fold PRx      & 0.09 (0.01) & 0.96 (0.05) & 363.4 \\
PREM              & 0.11 (0.02) & 0.83 (0.02) & 323.9 \\
AdaPT-GMM             & 0.10 (0.02) & 0.85 (0.02) & 325.7 \\
ZAP               & 0.10 (0.02) & 0.78 (0.06) & 300.7 \\
CAMT              & 0.10 (0.02) & 0.76 (0.03) & 289.5 \\
IHW-BH            & 0.08 (0.02) & 0.60 (0.03) & 227.0 \\
Efron-2grp        & 0.08 (0.01) & 0.81 (0.03) & 301.8 \\
BH                & 0.06 (0.01) & 0.71 (0.04) & 260.6 \\
Oracle            & 0.10 (0.00) & 1.00 (0.00) & 383.0 \\
\hline
\end{tabular}

\caption{Results for frequentist simulations. Similar to Table \ref{tab:simulation-bayesian}, the reported mean FDP and mean power are calculated over thirty replicates of the data. Parentheses display standard deviations.}
\label{tab:simulation-frequentist}
\end{table}

In the frequentist setting we continue to see HA-PRx and two-fold PRx leading in terms of power. It should be noted that HA-PRx fails to control FDR in the third simulation study --- but the additional false discovery rate accrued as a result of this is miniscule, and is accompanied by a significant leap in power. Furthermore, two-fold PRx successfully thresholds FDR in all simulations at the cost of a minor dip in power, while still maintaining a higher power than all non-PRx methods. The reason for this is the flexibility of PRx; PREM's approximation of $\pi_0(x)$, as one example, misses its discrete jumps as a result of underestimating the degrees of freedom in its spline-based estimate. The null proportion recovered by PRx, as seen in Figure \ref{fig:pi0 - alternative - frequentist}, manages to approximate each jump---even if its estimate is overly smoothed as a result of the Gaussian localization kernel---while also recovering the covariate-dependency in the signals. How close the realized mean false discovery proportion in simulation 3 is to the threshold level $\alpha$ suggests that, by virtue of PRx being an \textit{empirical Bayes} procedure as opposed to a fully Bayes procedure, it is able to approximate frequentist error rate control properties in large sample settings.

\begin{figure}[ht]
\centering
\begin{minipage}{\textwidth}
  \centering
  \includegraphics[width=0.8\textwidth]{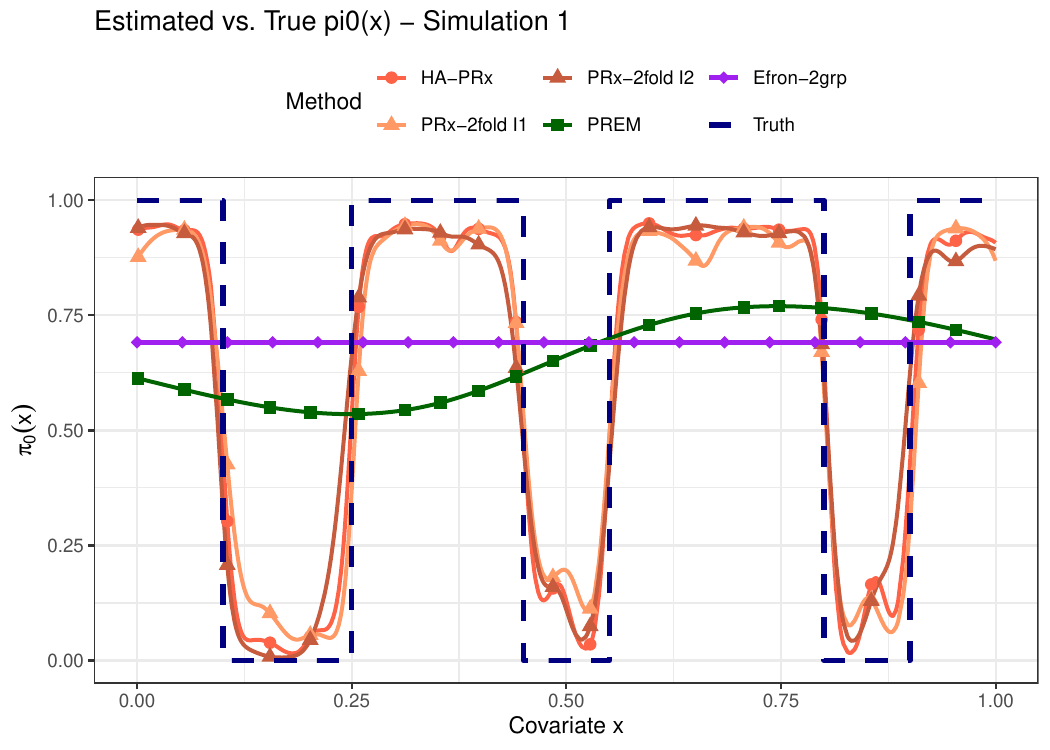}\\
\end{minipage}
\begin{minipage}{\textwidth}
  \centering
  \includegraphics[width=0.8\textwidth]{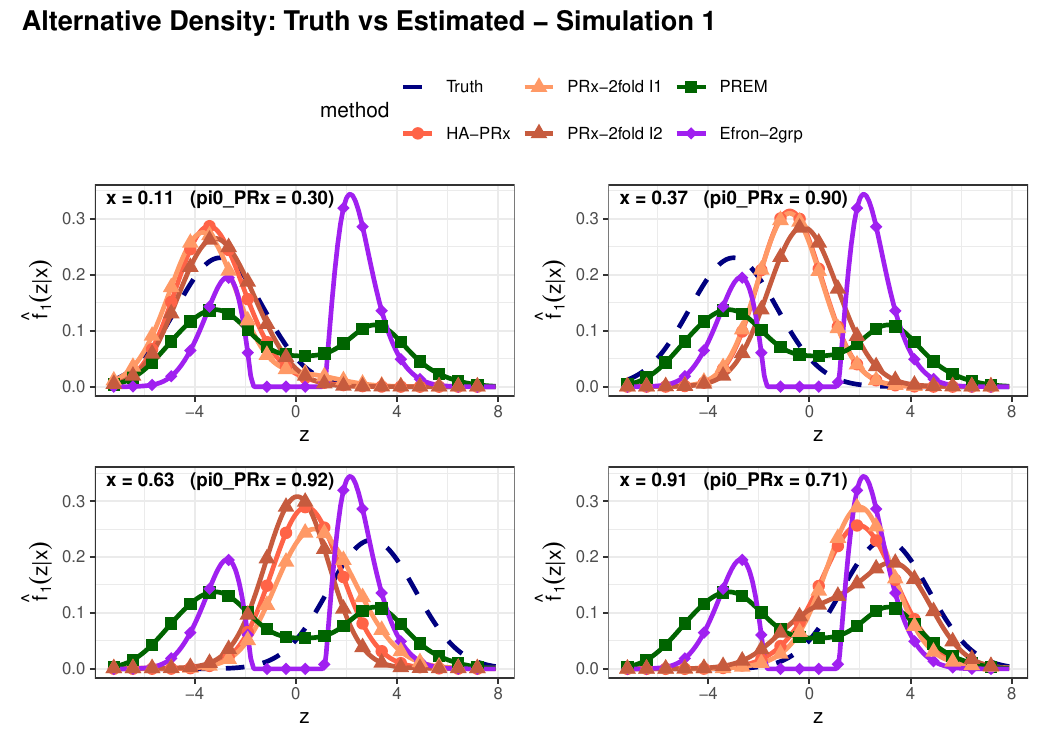}\\
\end{minipage}
\caption{Top panel displays the true underlying null proportion, $\pi_0(x)$ (dashed line), plotted against estimates $\hat{\pi}_0(x)$ given by HA-PRx (circle), fold 1 and fold 2 of two-fold PRx (triangles), PREM (square), and Efron's 2-group procedure (diamond). Bottom plot displays the alternative density estimates with the same shape scheme. This is for Simulation 1 under the frequentist scheme. Notice that PRx is able to adapt fairly well to the 0-1 null proportion. }

\label{fig:pi0 - alternative - frequentist}
\end{figure}

We conclude this section with a simulation involving a two-dimensional covariate. Motivation for this regime is given by spatially-aware data, in which we might observe our test statistics in tandem with coordinates, and from this map our hypotheses onto a two-dimensional plane. For this, let $X = (x_{1},x_{2})$, with $x_{i}\sim \text{Unif}[0,1]$ i.i.d. Let $\pi_0(x_1,x_2) = 0.9 - 0.5\cdot\sin(\pi x_1)\cdot\sin(\pi x_2)$. Have $H_i|X_i\sim \text{Bernoulli}(1-\pi_0(X_i))$, and let $z_i|H_i=0 \sim N(0,1)$ and $z_i|H_i=1,x_i\sim N(\theta_i, 1)$, with $\theta_i|X_i\sim N(\mu(X_i), 0.8^2)$, with $\mu(X) = -4 + 2\cdot(x_1 + x_2)$ for $x_1 + x_2 <1$ and $\mu(X) = 2\cdot (x_1+x_2)$ for $x_1+x_2\geq 1$. We also run an analogous frequentist simulation by letting $\pi_0(x_1,x_2) = 1-1\{\exists k\in\{1,2,3\}: (x_1 - c_{k,1}^2 ) + (x_2 - c_{k,2})^2 \leq r^2\}$ for $r = 0.15$, $c_1 = (0.2,0.2), c_2 = (0.5,0.7)$, and $c_3 = (0.8,0.4)$. Both of these simulations are repeated $B=30$ times, each with $n=1000$ hypotheses and a target control level of $\alpha=0.1$. The data is displayed in Figure \ref{fig:data hd} and the results are given in Table \ref{tab:simulation-hd} and Figures \ref{fig:data - alternative - hd - bayesian} and \ref{fig:data - alternative - hd - frequentist}.

As with all the other simulations we have done, methods based on PRx beat all non-oracle rules in terms of power, and by a significant amount. While HA-PRx fails to precisely control BFDR, the additional error accrued is miniscule in comparison to the subsequent increase in power. Furthermore, we find that two-fold PRx successfully controls BFDR, while also attaining a higher power than all non-PRx procedures---again, as a result of its ability to adapt to both a null proportion surface, and to a covariate-dependent alternative density.

\begin{figure}[ht]
\centering
\begin{minipage}{\textwidth}
  \centering
  \includegraphics[width=0.9\textwidth]{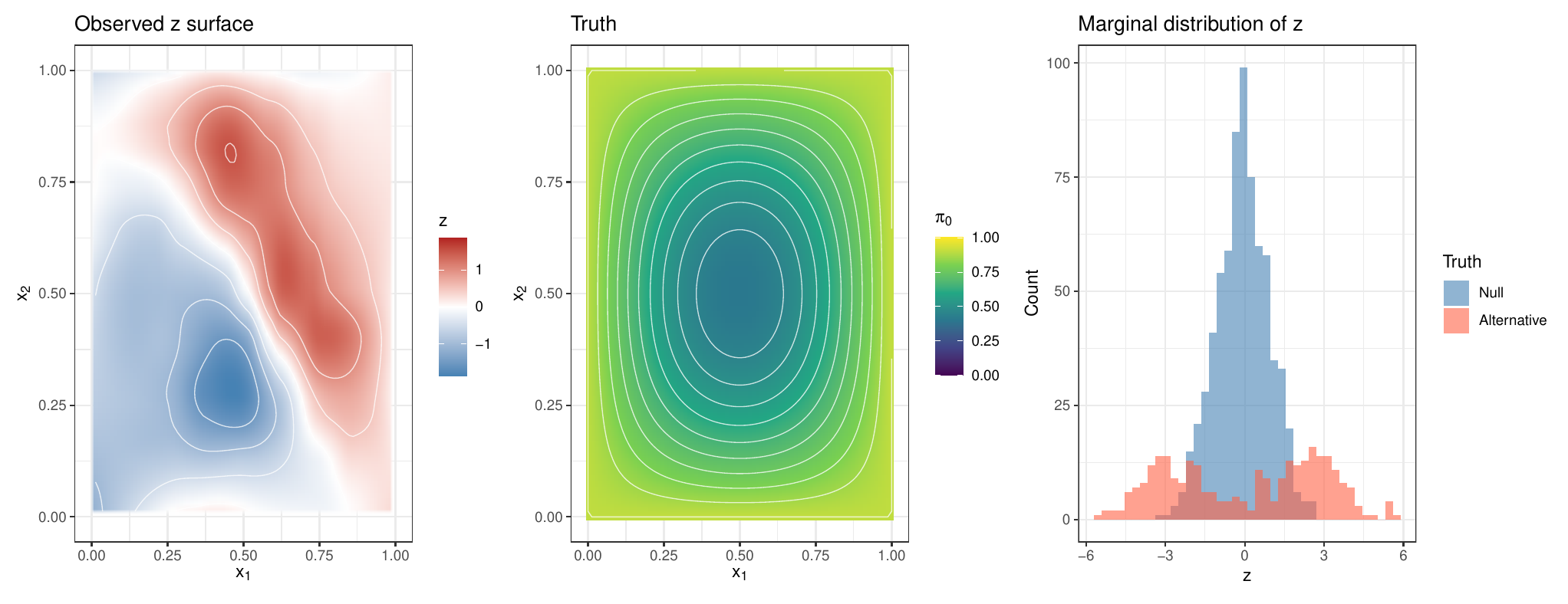}\\
\end{minipage}
\begin{minipage}{\textwidth}
  \centering
  \includegraphics[width=0.9\textwidth]{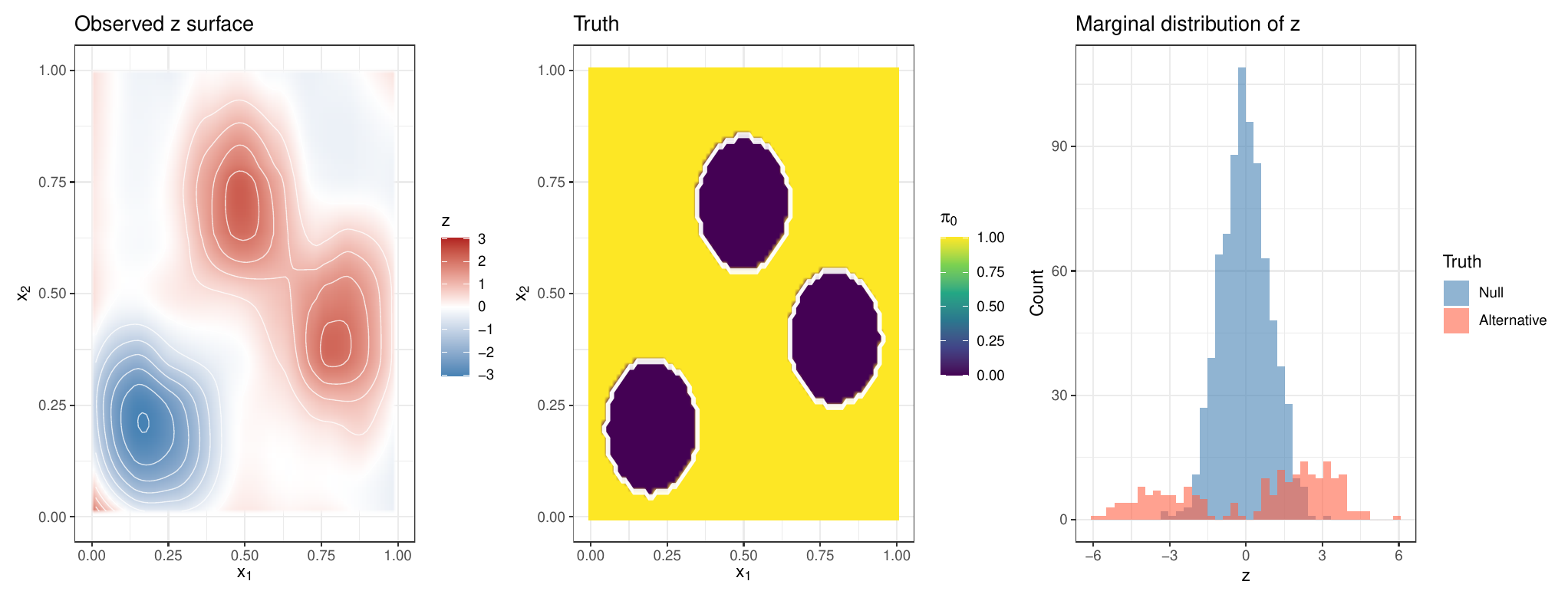}\\
\end{minipage}
\caption{From left to right: (1) topographic map of the $z$-values over the covariate space, (2) topographic map of the true null proportion, $\pi_0(x_1,x_2)$, and (3) a histogram of observed $z$-values. Top panel displays the data from the Bayesian 2D simulation, while the bottom panel displays the data from the frequentist 2D simulation.}
\label{fig:data hd}
\end{figure}

\begin{figure}[ht]
\centering
\begin{minipage}{\textwidth}
  \centering
  \includegraphics[width=0.8\textwidth]{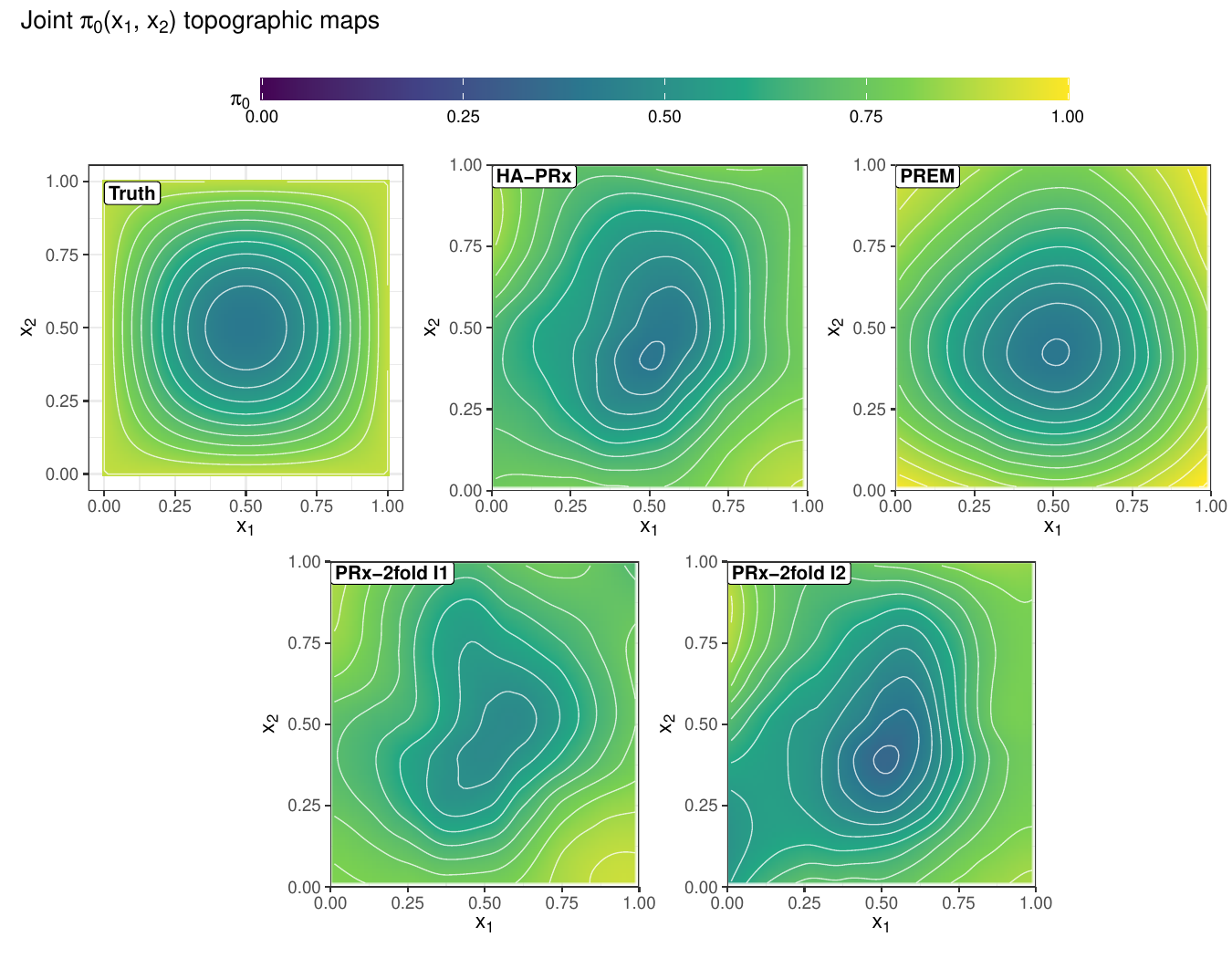}\\
\end{minipage}
\begin{minipage}{\textwidth}
  \centering
  \includegraphics[width=0.8\textwidth]{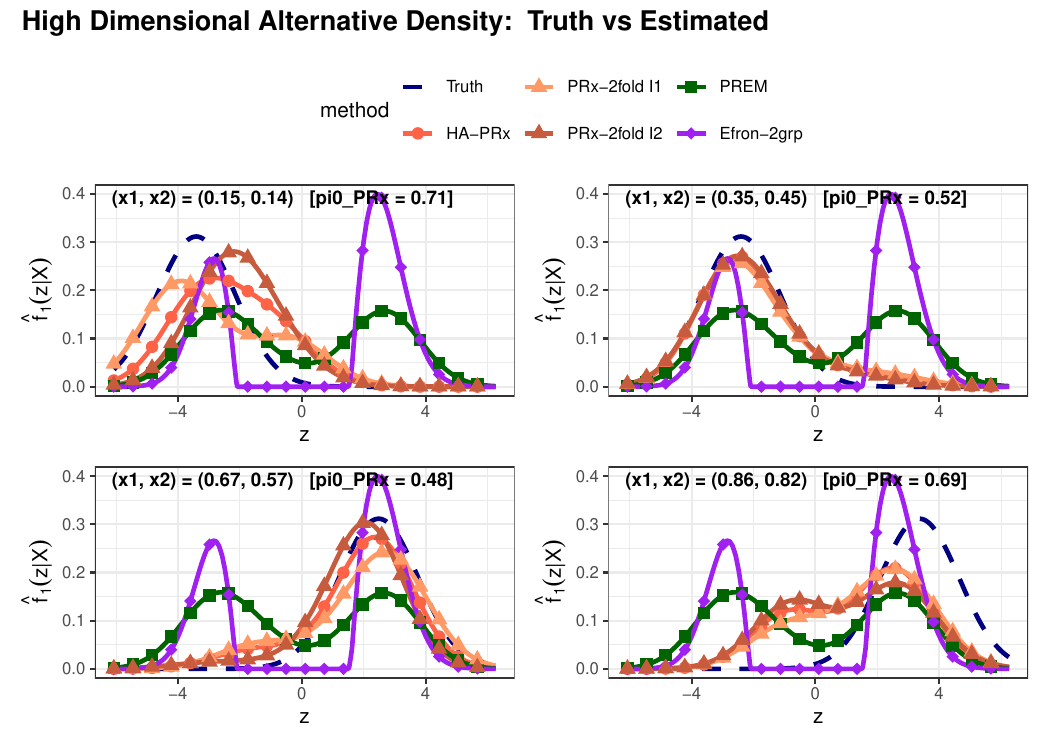}\\
\end{minipage}
\caption{Top panel displays topographic maps of the true null proportion for the 2D Bayesian simulation, as well as its estimates from PRx-based methods and PREM. The bottom plot displays the alternative density estimates with the same shape scheme as Figures \ref{fig: altdensity-bayesian} and \ref{fig:pi0 - alternative - frequentist}.}
\label{fig:data - alternative - hd - bayesian}
\end{figure}

\begin{figure}[ht]
\centering
\begin{minipage}{\textwidth}
  \centering
  \includegraphics[width=0.8\textwidth]{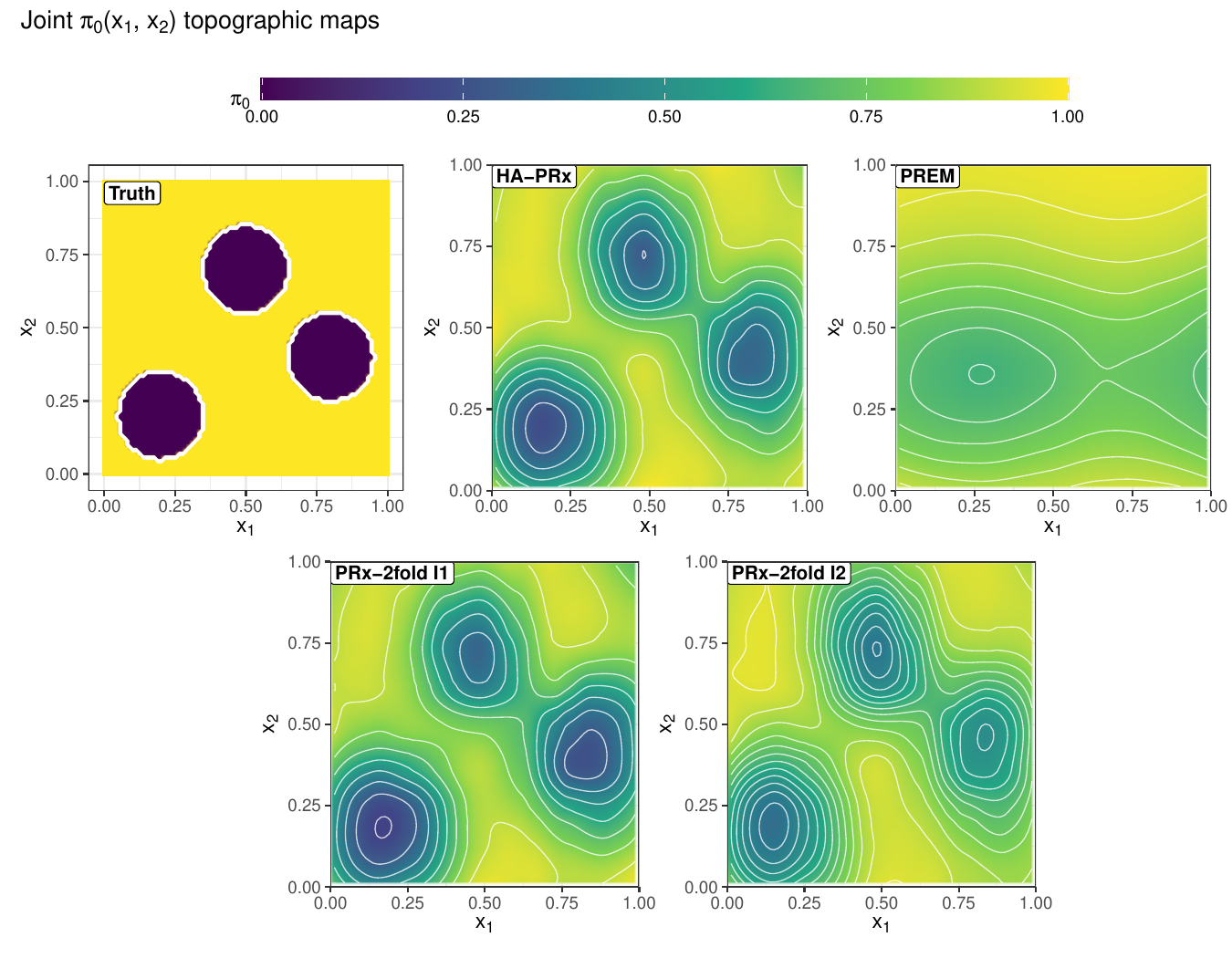}\\
\end{minipage}
\begin{minipage}{\textwidth}
  \centering
  \includegraphics[width=0.8\textwidth]{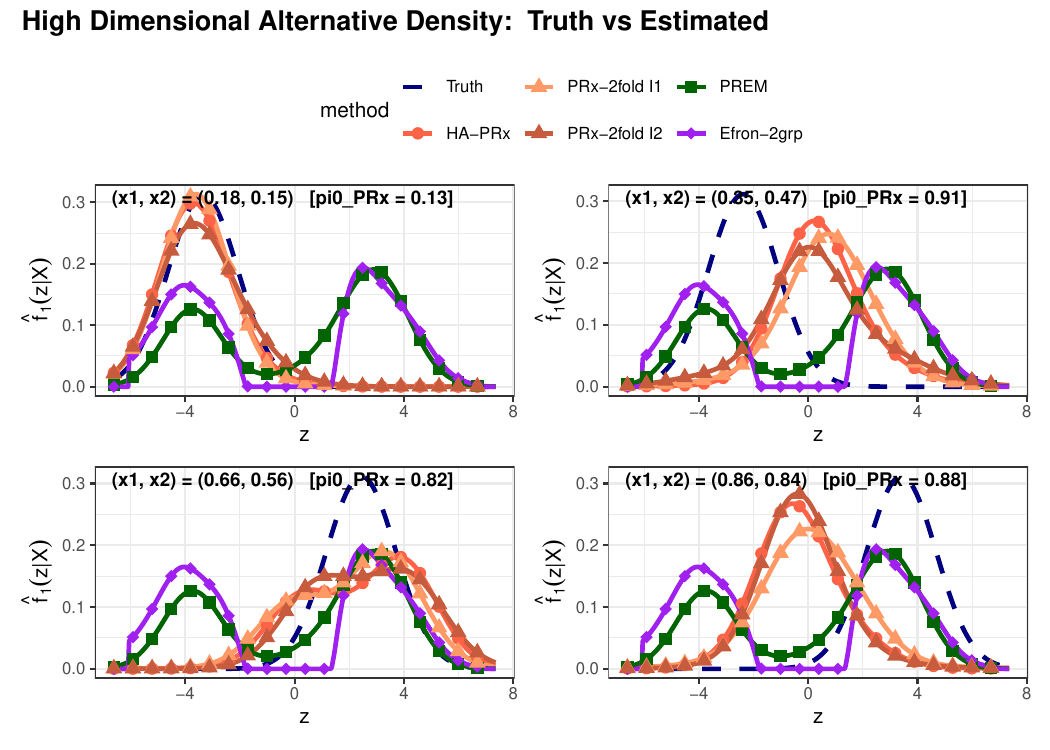}\\
\end{minipage}
\caption{Top panel displays topographic maps of the true null proportion for the 2D frequentist simulation, as well as its estimates from PRx-based methods and PREM. The Bottom panel displays the true alternative density at various points in the covariate space, plotted against PRx-based methods, PREM, and Efron's 2-groups procedure.}
\label{fig:data - alternative - hd - frequentist}
\end{figure}

\begin{table}[ht]
\centering

\begin{tabular}{lccc}
\hline
2D Simulation & Mean FDP (SD) & Mean Power (SD) & $n_{\mathrm{rejected}}$ \\
\hline
HA-PRx       & 0.11 (0.02) & 0.73 (0.03) & 247.6 \\
Two-fold PRx & 0.10 (0.02) & 0.70 (0.03) & 237.0 \\
PREM         & 0.10 (0.02) & 0.65 (0.04) & 217.0 \\
AdaPT-GMM        & 0.09 (0.03) & 0.54 (0.05) & 179.0 \\
ZAP          & 0.10 (0.03) & 0.64 (0.05) & 214.7 \\
CAMT         & 0.10 (0.03) & 0.58 (0.05) & 197.2 \\
IHW-BH       & 0.07 (0.02) & 0.52 (0.05) & 169.4 \\
Efron-2grp   & 0.09 (0.02) & 0.59 (0.03) & 196.9 \\
BH           & 0.07 (0.02) & 0.55 (0.03) & 179.2 \\
Oracle       & 0.10 (0.01) & 0.75 (0.03) & 251.7 \\
\hline
\end{tabular}

\begin{tabular}{lccc}
\hline
2D Frequentist & Mean FDP (SD) & Mean Power (SD) & $n_{\mathrm{rejected}}$ \\
\hline
HA-PRx       & 0.12 (0.02) & 0.90 (0.03) & 213.3 \\
Two-fold PRx & 0.09 (0.02) & 0.88 (0.03) & 199.7 \\
PREM         & 0.11 (0.03) & 0.62 (0.04) & 143.7 \\
AdaPT-GMM        & 0.09 (0.03) & 0.76 (0.04) & 174.0 \\
ZAP          & 0.11 (0.04) & 0.63 (0.05) & 145.8 \\
CAMT         & 0.10 (0.04) & 0.52 (0.06) & 120.2 \\
IHW-BH       & 0.08 (0.03) & 0.44 (0.04) & 100.1 \\
Efron-2grp   & 0.10 (0.02) & 0.58 (0.03) & 133.2 \\
BH           & 0.08 (0.02) & 0.55 (0.04) & 122.6 \\
Oracle       & 0.10 (0.00) & 1.00 (0.00) & 230.0 \\
\hline
\end{tabular}

\caption{Results for the 2D simulation.}
\label{tab:simulation-hd}
\end{table}

\subsection{Neural Synchrony}\label{sec:neural}
Neural synchrony refers to the phenomenon in which two or more neurons exhibit electrical spikes in very quick succession or at the same time. It is believed that this phenomenon plays an integral role in the proper functioning of the brain, and certain cognitive disorders have been linked to a disruption of it. We make use of the \textit{synchrony statistic}, $\hat{\zeta}_i$, to quantify the synchrony between a pair of neurons. Detecting the presence of neural synchrony given a collection of recorded neurons has proven a difficult task; applying the Benjamini-Hochberg procedure to a collection of ${n\choose 2}$ hypotheses ignores auxiliary information that we might have on the neurons. It therefore becomes desirable to use a covariate-adaptive multiple testing procedure to solve this problem.

The dataset we consider is publicly available at $\texttt{https://doi.org/10.7910/DVN/WJABUK}$. It consists of a large number of synchrony statistics $\hat{\zeta}_i$ accompanied by $s_i$, the estimated standard error of $\log \hat{\zeta}_i$. Following the work of \cite{smith2008spatial, kelly2010local}, we can calculate $z_i = \log \hat{\zeta}_i/s_i$, and treat $\{z_i\}_{i=1}^n$ as our collection of test statistics. We consider two covariates of interest: the distance between the neurons in micrometers, and the correlation between the neurons' tuning curves. Both of these are hypothesized to affect neural synchrony. The work of \cite{scott2015false} uses PREM to detect significant $z$-values, and found that doing so leads to more rejections than what is achieved with a covariate-free application of Efron's two-groups model, in which the $z$-values are modeled as being generated from $m(z) = \pi _0 f_0(z) + (1-\pi_0)f_1(z)$. In using PREM, the two-groups model is implicitly extended to include $\pi_0(x)$ such that $m(z|x) = \pi_0(x)f_0(z) + (1-\pi_0(x))f_1(z)$. We propose modeling $z_i$ using our version of the covariate-dependent two-groups model with a null distribution of $N(\theta_0, \sigma_0^2)$, a covariate-dependent null proportion $\pi_0(x)$, and a covariate-dependent alternative density, $f_1(z|x)$. Two-fold PRx is used to discover any significant $z$-values which we tentatively label as a synchronous pair. We use PREM as a reference procedure. Our goal in this is to emphasize the benefit gained by using PRx to recover the local false discovery rate surface; being the only other procedure that shares this aim, PREM offers a valuable benchmark. 

The methodology involved in this analysis differs from that of the simulation studies through the fact that we now wish to estimate an \textit{empirical null}, $N(\theta_0, \sigma_0^2)$, where $\theta_0$ and $\sigma_0^2$ are not assumed to be known. An empirical null is useful to estimate when classical testing assumptions are violated. For example, it has been shown that correlation between the $z$-values may disperse or concentrate the null component relative to the standard Gaussian \citep{efron2007correlation}. This is especially relevant in this application --- since each test corresponds to a neuron pair, it is reasonable to expect correlation between two test statistics involving a common neuron. Furthermore, the accumulation of signals into a central bulk is a well-documented phenomenon when signals are permitted to lie arbitrarily close to $z=0$ \citep{xiang2024interpretation}. Even if the true null is a standard Gaussian, the observed central bulk may become warped as a result of these near-null signals. Recovery of $N(\theta_0,\sigma_0^2)$ in place of $N(0,1)$ adjusts for this.

The dataset we use consists of $n=7004$ pairs, each of which comes with a test statistic $z_i$ and two predictors, $x_1$ and $x_2$. These represent tuning curve correlation and inter-neuron distance, respectively. As a preprocessing step we use min-max normalization to rescale $\mathcal{X}$ so it fits in the unit ball $[0,1]^2$. PRMLx maximization is used to obtain estimates $\hat{b}_{\text{dist}},\hat{b}_{\text{corr}},\hat{\theta}_0,$ and $\hat{\sigma}_0$. Since we are using two-fold PRx, this must be iterated twice --- once over each fold. Maximization yields an estimate of $\hat{\theta}_0 = \{0.48,0.51\}$ and $\hat{\sigma}_0 = \{0.73,0.75\}$, suggesting positive bias and underdispersion in the empirical null relative to the theoretical null, $N(0,1)$. The bandwidth estimates are also interpretable---$\hat{b}_{\text{dist}} = \{48.59, 56.78\}$ and $\hat{b}_{\text{corr}} = \{25.10, 39.49\}$---suggesting more localized information-sharing over inter-neuron distances than over tuning curve correlations. Plugging these into our variation of the two-groups model, we run two-fold PRx to obtain estimates $\{\hat{\ell}_i\}_{i=1}^n$. We end up rejecting $1112$ hypotheses in comparison to $721$ from PREM. The rejections are visualized in Figure \ref{fig:disagreements}, and plotted over the covariate plane in Figure \ref{fig: covariates-against-rejections}. 

\begin{figure}[H]
    \centering
    \begin{minipage}{0.5\textwidth}
        \centering
        \includegraphics[width=\linewidth]{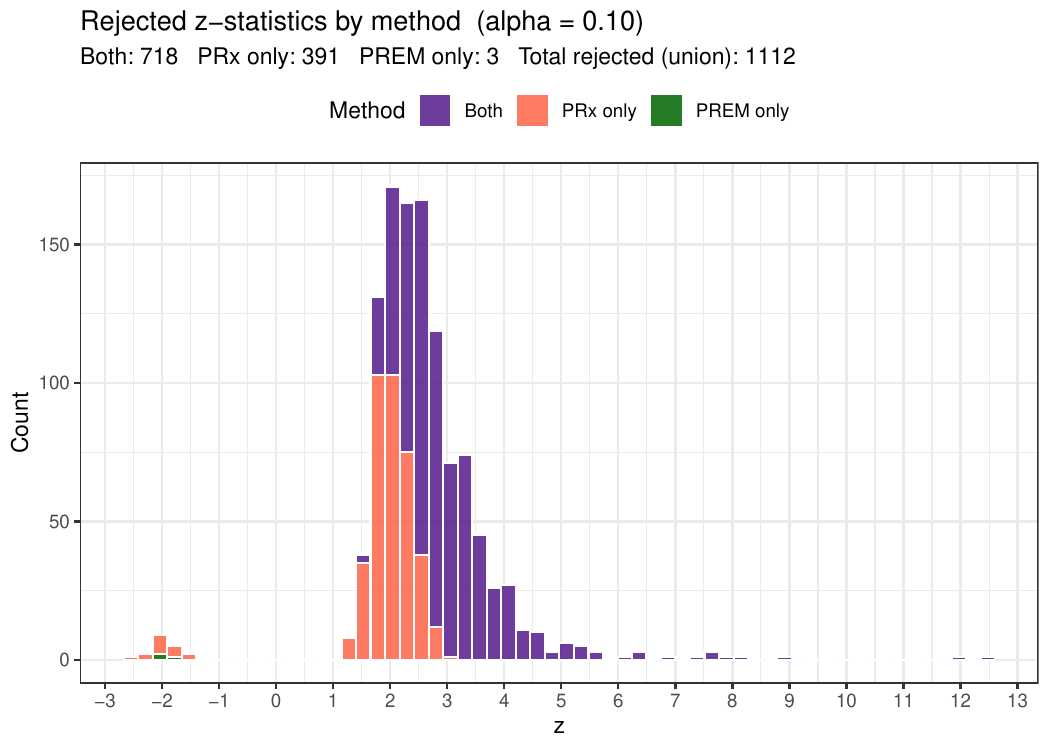}
    \end{minipage}
    \caption{Rejected $z$-values, colored by which method carried out the rejection.}
    \label{fig:disagreements}
\end{figure}

While it is impossible to say whether or not our rejections are \textit{better} than those obtained by PREM, it is clear from Figure \ref{fig: covariates-against-rejections} that the disagreements between the two methods, the vast majority of which consist of rejections made by PRx where PREM fails to reject, concentrate on a specific part of the covariate space---in particular, on smaller inter-neuron distances. In comparison, rejections from both procedures are spread out much more evenly over the tuning curve correlation axis. This suggests two things: firstly, that smaller inter-neuron distances do indeed correspond with a heightened chance of synchrony, and secondly, that PRx is acting on this much more strongly than PREM. Indeed, while both two-fold PRx and PREM reject heavily for small inter-neuron distances, the quantity of extra rejections made by PRx indicates that it is more heavily pushed by that covariate region to reject. 
\begin{figure}[H]
    \centering
    \begin{minipage}{0.48\textwidth}
        \centering
        \includegraphics[width=\linewidth]{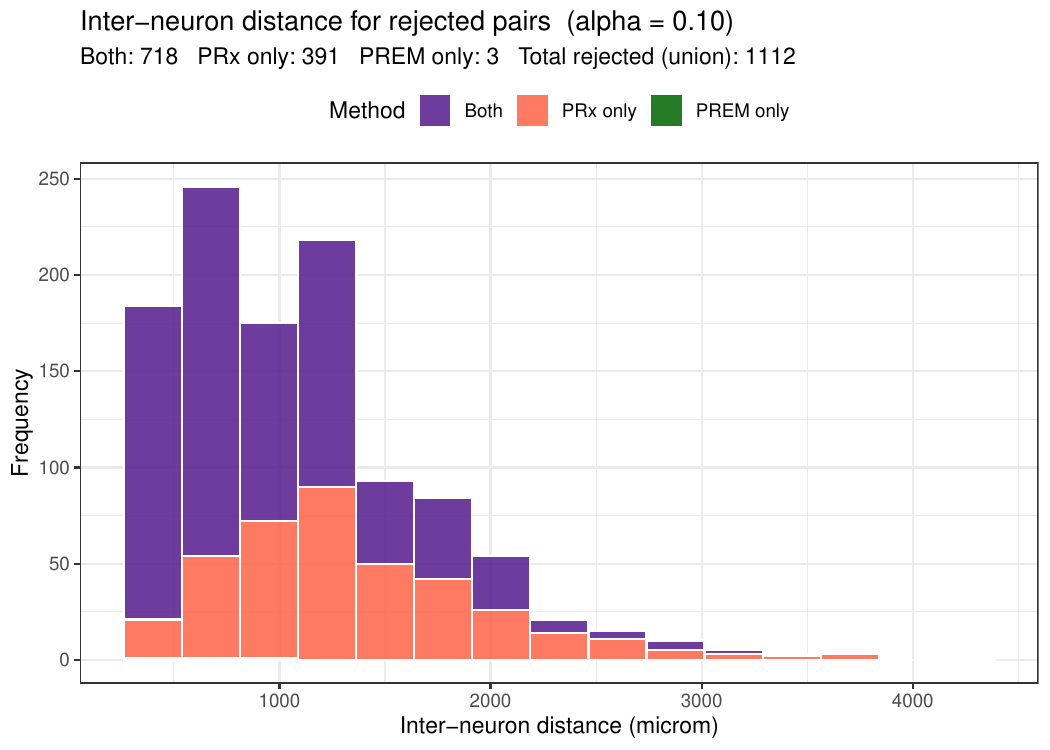}
    \end{minipage}
    \hfill
    \begin{minipage}{0.48\textwidth}
        \centering
        \includegraphics[width=\linewidth]{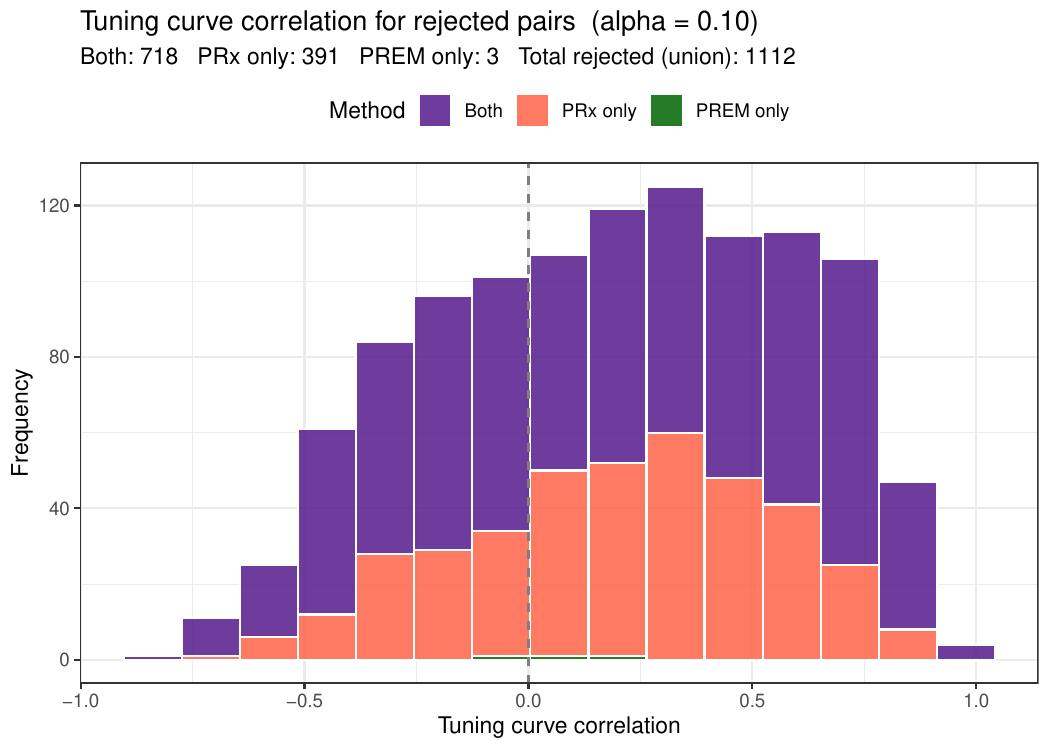}
    \end{minipage}
    
    \caption{Left panel displays rejection agreements/disagreements plotted over inter-neuron distances. Right panel displays rejection agreements/disagreements plotted over tuning curve correlations.}
    \label{fig: covariates-against-rejections}
\end{figure}
A notable aspect of the rejections made by PRx is that they cover nearly all rejections made by PREM; in this sense, PRx is the more aggressive procedure. We furthermore find through Figure \ref{fig:disagreements} that the majority of rejections made only by PRx are concentrated on the left side of the rejected bulk, corresponding to $z$-values closer to the center of the empirical null. Thus, relative to PREM, PRx differs not only in the number of rejections it makes, but also in their location. In particular, PRx identifies additional hypotheses at more moderate $z$-values, rather than concentrating its rejections at only the most extreme observations. These additional rejections may be of interest in scientific applications, particularly when practitioners have the opportunity to further investigate individual discoveries. While we cannot determine from this analysis alone whether these additional rejections represent true signals, the pattern suggests that PRx and PREM differ meaningfully in how they identify potentially informative near-null observations---and between the two, it is PRx that makes strictly more discoveries.

\section{Theoretical Results}\label{sec:theory}

We have seen in Section \ref{sec:casestudies} that PRx testing leads to highly competitive results in terms of both BFDR control and power maximization. What remains to be seen is whether or not it admits desirable theoretical properties. Towards this end we have two goals: to show that the PRx estimate of the localized false discovery rate is mathematically justified, and to show that the rejection rules we use admit provable error rate control properties. We begin by defining the PRx estimate. Let $\hat{\Psi}_n$ denote the $n$th PRx iterate. Letting $\ell(z,x) = \text{Pr}(H_i = 0|z,x)$ denote the oracle covariate-localized false discovery rate, we want to show that $\hat{\ell}_n(z,x)\rightarrow \ell(z,x)$, where $\hat{\ell}_n$ is the local false discovery rate estimate induced by $\hat{\Psi}_n$ under either naive, hole-adjusted, or conservative PRx. 

In order to accomplish this we require three main assumptions. {\bf (A1)} The oracle null proportion is bounded such that $0<\pi_{\text{min}}<\pi_0(x)<\pi_{\text{max}}<1$.  {\bf (A2)} The ground truth alternative mixing density $\psi(u|x)$ satisfies two-sided domination: $\kappa_1 \psi_0(u) \leq \psi(u|x)\leq\kappa_2 \psi_0(u)$ for constants $\kappa_1,\kappa_2$, and a density function $\psi_0(u)$. {\bf (A3)} There is a fixed $\delta > 0$ such that  $\psi(u|x) = 0$ on the interval $[-\delta, \delta]$. 

Assumptions (\textbf{A1})-(\textbf{A2}) are used in verifying the ratio condition in the PRx consistency theorem, which also requires the following additional regularity conditions. 
%the standard PRx consistency assumptions hold. 
Let us consider a model in which $z_i|x_i\sim \int \phi(z|u)\Psi(u|x)\mu(du)$ for some unknown mixing density $\Psi(u|x)$ and known, parametric kernel $\phi(z|u)$. Then under the following conditions, the PRx estimate $\hat{\Psi}_n$ converges almost surely in the weak topology to $\Psi$:
\begin{enumerate}
    \item The sequence $x_1,\ldots,x_n$ are i.i.d. and integrable. Furthermore, $\mathcal{X}$ and $\Theta$ are both compact.
    \item The mapping $u\rightarrow\phi(z|u)$ is bounded and continuous in $u$ for each fixed $y$.
    \item The mapping $f(u|x)\rightarrow\int\phi(z|u)f(u|x)\mu(du)$ is injective.
    \item For any $\epsilon>0$ and compact set $\mathcal{C}_0\subset\mathcal{Z}$, there exists a compact set $\mathcal{U}_0\subset\mathcal{U}$ such that $\int_{\mathcal{C}_0}\phi(z|u)\nu(dz)<\epsilon$ for all $\theta\notin\Theta_0$.
    \item There exists a constant $0<B<\infty$ such that for any $u_1,u_2,u_3\in\mathcal{U}$:
    $$\int_\mathcal{Y}\frac{\phi(z|u_1)^2}{\phi(z|u_2)^2}\phi(z|u_3)\nu(dz)<B$$
    \item There exists a constant $C>0$ such that for all $(x_i,z_i)$, it is the case that $m(z_i|x_i)\geq Cm(z_i|x)$. 
\end{enumerate}

In particular, assumption $6$ is the only one that requires control over $\pi_0(x)$ and $\psi(u|x)$, which (\textbf{A1}) and (\textbf{A2}) grant (see the supplementary); the other assumptions are fulfilled by the model laid out in Section \ref{sec:methods}. (\textbf{A3}) is a technical condition necessary to derive consistency of the estimated null proportion $\hat{\Psi}_n(\{0\}|x)$ to the oracle $\pi_0(x)$. With it, consistency arises naturally as a consequence of the Portmanteau theorem. It can also be interpreted as a further strengthening of the \textit{zero assumption} \citep{efron2004large}, in that it enforces a level of separability between the alternative and null distribution. In particular, it excludes the signal locations from a $\delta$-ball around $u=0$, hence pushing signals farther away from the central bulk. 

Our simulations in Section \ref{sec:sims} are not designed to follow assumption (\textbf{A3})---most involve a latent mixing density that places mass arbitrarily close to $0$. This represents a more difficult data regime than if it were satisfied; indeed, if the signal locations were truly bounded away from $u=0$, then they only become easier to distinguish from the null, and thus easier to reject. Despite this, the performance of hole-adjusted PRx in Section \ref{sec:sims} indicates a high degree of robustness towards violations of (\textbf{A3}), especially when $h$ is chosen to be extremely small. With these three assumptions, we may now present our main consistency result. 
%\begin{lemma}
%    Under assumptions (\textbf{A1}) and (\textbf{A2}), the PRx consistency conditions hold. 
%\end{lemma}
%\begin{proof}
%    By assumptions (\textbf{A1}) and (\textbf{A2}), and denoting $f_1^0(z) = \int_{-C}^C N(z;\theta_0 + u,\sigma_0^2)\psi_0(u)du$, one can show that $m(z|x_i)/m(z|x)$ satisfies the inequality:
%    \begin{equation}
%        \begin{split}
%            \frac{m(z|x_i)}{m(z|x)}\geq \frac{\pi_{\text{min}} f_0(z) + \kappa_1(1-\pi_{\text{max}})f_1^0(z)}{\pi_{\text{max}} f_0(z) + \kappa_2f_1^0(z)} = %\frac{\pi_{\text{min}}\tau(z) + \kappa_1(1-\pi_{\text{max}})}{\pi_{\text{max}}\tau(z) + \kappa_2}
%        \end{split}
%    \end{equation}
%    Where we denote $\tau(z) := f_0(z)/f_1^0(z)$. The final bound in the above expression is monotonic in $\tau$ --- in particular, if we write $g(\tau) = (\pi_{\text{min}}\tau +\kappa_1(1-\pi_{\text{max}})) / (\pi_{\text{max}}\tau + \kappa_2)$, one can show that $g(\tau)$ is monotonically increasing when $\pi_{\text{min}}\kappa_2>\kappa_1(1-\pi_{\text{max}})\pi_{\text{max}}$ and monotonically decreasing if it is the other way around. In either case, we now know that the infimum of $g(\tau)$ is achieved either at $\tau=0$ or at $\tau\rightarrow\infty$. This then gives us the explicit lower bound:
%    \begin{equation}
%        \begin{split}
%            \frac{m(z|x_i)}{m(z|x)}\geq \min\bigg(\frac{\pi_{\text{min}}}{\pi_{\text{max}}}, \frac{\kappa_1(1-\pi_{\text{max}})}{\kappa_2}\bigg)
%        \end{split}
%    \end{equation}
%\end{proof}

%\begin{proof}
%    See the supplementary.
%\end{proof}

\begin{theorem}[Consistency of PRx]\label{thm:consistency}
    Suppose assumptions (\textbf{A1}), (\textbf{A2}), and (\textbf{A3}) hold. In particular, let $\psi(\cdot|x) = 0$ on $[-\delta,\delta]$. Define $\hat{\ell}_n^{(h)}(z,x)$ to be the PRx estimate obtained by initializing $\hat{\Psi}_0^{(h)} (\cdot|x) = \hat{\pi}_{0,0}^{(h)}(x)\delta_0 + (1-\hat{\pi}_{0,0}^{(h)}(x))\hat{\psi}_0^{(h)}(\cdot|x)$, where $0<\hat{\pi}_{0,0}^{(h)}(x)<1$, $\hat{\psi}_0^{(h)}(\cdot|x) = 0$ on $[-h,h]$, and $\hat{\psi}_0(\cdot|x)>0$ everywhere else. For every $h\leq\delta$, we have:
    \begin{equation}
        \begin{split}
            \lim_{n\rightarrow\infty}\iint |\hat{\ell}_n^{(h)}(z,x) - \ell(z,x)| \Pi(dx)m(z|x)dz= 0\text{ a.s.}
        \end{split}
    \end{equation}
\end{theorem}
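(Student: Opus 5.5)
The plan is to derive the $L^1$ convergence in three steps: weak convergence of the PRx iterates, pointwise convergence of the plug-in local false discovery rate, and then dominated convergence. First, we invoke the PRx consistency theorem stated above. Conditions 1--5 hold for the Gaussian location kernel on the compact $\mathcal{U}=[-C,C]$, and condition 6 follows from (\textbf{A1})--(\textbf{A2}). Indeed, writing $f_1^0(z)=\int N(z;\theta_0+u,\sigma_0^2)\psi_0(u)du$, one has
\begin{equation*}
\frac{m(z|x_i)}{m(z|x)}\geq \frac{\pi_{\min}f_0(z)+\kappa_1(1-\pi_{\max})f_1^0(z)}{\pi_{\max}f_0(z)+\kappa_2 f_1^0(z)}\geq \min\Big(\frac{\pi_{\min}}{\pi_{\max}},\frac{\kappa_1(1-\pi_{\max})}{\kappa_2}\Big)>0 .
\end{equation*}
The hole-adjusted initialization is supported on $\{0\}\cup([-C,-h]\cup[h,C])$. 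Because $h\le\delta$, the truth $\Psi(\cdot|x)$ is absolutely continuous with respect to the same dominating measure $\mu=\delta_0+\mathrm{Leb}|_{[-C,-h]\cup[h,C]}$. The PR multiplicative update never creates mass outside the initial support, so the PRx theorem can be applied on this restricted parameter space. Injectivity on it is inherited, since Gaussian location mixtures are identifiable. This gives $\hat{\Psi}_n^{(h)}(\cdot|x)\xrightarrow{w}\Psi(\cdot|x)$ almost surely, for $\Pi$-a.e.\ $x$ (or for all $x$, if the cited theorem is uniform in $x$ on compact $\mathcal{X}$).

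Second, we show pointwise convergence of both pieces of $\hat\ell_n^{(h)}$. For $\hat m_n(z|x)$, the map $u\mapsto N(z;\theta_0+u,\sigma_0^2)$ is bounded and continuous, so $\hat m_n(z|x)\to m(z|x)$ for every $z$ by weak convergence. For the null proportion, the key observation is that, within the support of the iterates, the atom $\{0\}$ is an open-and-closed set: it is separated from $[-C,-h]\cup[h,C]$ by distance $h>0$. Its boundary therefore carries zero $\Psi$-mass, and the Portmanteau theorem yields $\hat\Psi_n^{(h)}(\{0\}|x)\to\Psi(\{0\}|x)=\pi_0(x)$. Concretely, apply weak convergence to a continuous bump that equals $1$ at $0$ and vanishes outside $(-h,h)$. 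This is exactly where (\textbf{A3}) with $h\le\delta$ is used: without the hole, mass from the continuous part could drift into the atom in the limit. Since $m(z|x)\ge \pi_{\min}f_0(z)>0$, the ratio $\hat\ell_n^{(h)}(z,x)=\hat\Psi_n^{(h)}(\{0\}|x)f_0(z)/\hat m_n(z|x)$ converges to $\ell(z,x)$ pointwise.

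Third, we pass to the integrated statement by dominated convergence with respect to $\Pi(dx)m(z|x)dz$. We have $\ell\le 1$, but $\hat\ell_n$ need not be bounded by $1$ a priori. We bound it using $\hat m_n(z|x)\ge \hat\Psi_n(\{0\}|x)f_0(z)$, which holds because the atom's contribution to $\hat m_n$ is exactly $\hat\Psi_n(\{0\}|x)f_0(z)$ and the remaining contribution is nonnegative. This gives $\hat\ell_n^{(h)}\le 1$ identically, so $|\hat\ell_n-\ell|\le 1$ and this bound is integrable against the probability measure. Pointwise a.s.\ convergence for each $(z,x)$ must be upgraded to a single null set. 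We do this either by Fubini applied to the almost-sure event in the joint measure or by using a countable dense set with continuity of both sides in $z$, and then dominated convergence finishes the proof.

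The main obstacle is the first step. We need to justify that the PRx consistency theorem applies with the hole-adjusted support and yields convergence simultaneously for ($\Pi$-almost) all $x$ on one probability-one event. I would try first to use the uniform-in-$x$ version from \cite{lin2026fastsemiparametricdensityregression}. Failing that, I would argue via Fubini on the product of the data probability space and $\Pi$, and note that Step 3 needs convergence only for $\Pi\otimes m$-almost every $(z,x)$.
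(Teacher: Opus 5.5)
Your proposal is correct and follows essentially the same route as the paper. Both arguments apply PRx weak consistency on the hole-restricted compact support $\Theta_h=[-C,-h]\cup\{0\}\cup[h,C]$, then Portmanteau on the isolated atom $\{0\}$ to get $\hat\Psi_n^{(h)}(\{0\}|x)\to\pi_0(x)$, then continuous mapping for the ratio, and finally Fubini together with bounded convergence. Your additions fill in details the paper leaves implicit: the explicit check of condition 6 from (\textbf{A1})--(\textbf{A2}), the bound $\hat\ell_n^{(h)}\le 1$ via $\hat m_n(z|x)\ge\hat\Psi_n^{(h)}(\{0\}|x)f_0(z)$, and the handling of the null set.
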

\begin{proof}
    Let us denote $\Theta = [-C,-\delta]\cup\{0\}\cup[\delta,C]$ and $\Theta_h := [-C,-h]\cup\{0\}\cup [h, C]$. Notice that $\Theta\subseteq\Theta_h$, and so we can say that both $\Psi$ and $\hat{\Psi}_n^{(h)}$ are probability measures supported on $\Theta_h$. Since $\Theta_h$ is compact and all the consistency assumptions of PRx hold on this expanded support, we have that $\hat{\Psi}_n^{(h)}(\cdot|x)\xrightarrow{w} \Psi(\cdot|x)$. Let $(z,x)$ be fixed points. Since $\{0\}$ is an isolated set in $\Theta_h$, applying Portmanteau theorem yields consistency of the estimated null proportion $\hat{\pi}_{0,n}^{(h)}(x)\xrightarrow{a.s.} \pi_0(x)$. By definition of weak convergence, we also have that $\hat{m}_n^{(h)}(z|x)\xrightarrow{a.s.} m(z|x)$. Continuous mapping theorem can then be used to show $\hat{\ell}_n^{(h)}(z,x)\xrightarrow{a.s.} \ell(z,x)$. To be clear, up to this point these convergence results are all with respect to the randomness in the data, with fixed $(z,x)$. Fubini's theorem, combined with the boundedness of $|\hat{\ell}_n^{(h)}(z,x)- \ell(z,x)|\leq 2$, then implies $L^1$ convergence over $(z,x)\sim \Pi(dx)m(z|x)dz$.
\end{proof}
A more ambitious goal than the above result would be full consistency: can we show that naive PRx leads to consistent recovery of the local false discovery rate? The primary difficulty in this latter approach comes in showing consistency of the estimated term $\hat{\Psi}_n(\{0\}|x)$ towards its oracle target, $\pi_0(x)$. Towards this end, Portmanteau may be leveraged to show that $\limsup_{n} \hat{\Psi}_n(\{0\}|x)\leq \pi_0(x)$. To bound the $\liminf$ term, it suffices to show that $\lim_{\epsilon\rightarrow0}\limsup_{n\rightarrow\infty}\int_{-\epsilon}^\epsilon\hat{\psi}_n(u|x)du=0$ --- but a proof of this final statement has so far eluded us. 

Assumption (\textbf{A3}) offers a second approach given by hole-adjusted PRx---initialize $\hat{\psi}_0(u|x) = 0$ on $u\in[-h,h]$, and set $h$ to be an extremely small value. As suggested by Theorem \ref{thm:consistency}, a sufficiently small value of $h$ leads to consistent recovery of the oracle. A third, more conservative approach is also available: using the fact that for any $\eta>0$, $[-\eta,\eta]$ is a continuity set, Portmanteau can be leveraged to argue that $\hat{\Psi}_n([-\eta,\eta]|x)\rightarrow \pi_0(x) + (1-\pi_0(x))\int_{-\eta}^\eta \psi(u|x)du$. We note that $(1-\pi_0(x))\int_{-\eta}^\eta\psi(u|x)du$ represents a bias from the true null proportion $\pi_0(x)$ that is a decreasing function in $\eta$. Hence, for $\eta$-small, we may take $\hat{\Psi}_n([-\eta,\eta]|x)$ as a conservative estimate of $\pi_0(x)$. The following lemma then holds:
\begin{lemma}[Shrinking-$\eta$ Consistency]\label{lem:conservativeconsistency}
    Let $0<\hat{\pi}_{0,0}(x)<1$, and denote the conservative local false discovery rate estimator to be $\hat{\ell}_{n,\eta}(z,x) := \min\Big\{\frac{\hat{\Psi}_n([-\eta,\eta]|x)f_0(z)}{\hat{m}_n(z|x)},1\Big\}$. Under assumptions (\textbf{A1}) and (\textbf{A2}), the following limiting result holds:
    \begin{equation}
        \begin{split}
            \lim_{\eta\downarrow 0}\lim_{n\rightarrow\infty} \iint |\hat{\ell}_{n,\eta}(z,x) - \ell(z,x)| \Pi(dx)m(z|x)dz= 0\text{ a.s.}
        \end{split}
    \end{equation}
\end{lemma}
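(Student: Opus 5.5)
The plan mirrors the proof of Theorem \ref{thm:consistency}, with the isolated-atom argument replaced by a continuity-set argument. A limit in $\eta$ is then taken at the end.

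\emph{Step 1: convergence at fixed $\eta$.} Under (\textbf{A1})--(\textbf{A2}) the PRx consistency conditions hold, including the ratio condition~6. We need the initialization to have full support on $[-C,C]\cup\{0\}$; no hole is required here. Hence $\hat{\Psi}_n(\cdot|x)\xrightarrow{w}\Psi(\cdot|x)$ almost surely.

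For fixed $x$ and $\eta>0$, the boundary of $[-\eta,\eta]$ is $\{-\eta,\eta\}$. The measure $\Psi(\cdot|x)$ gives this set zero mass, because its continuous part is absolutely continuous and its only atom sits at $0$. The Portmanteau theorem therefore gives
\[
\hat{\Psi}_n([-\eta,\eta]|x)\to \pi_0(x)+(1-\pi_0(x))\int_{-\eta}^{\eta}\psi(u|x)\,du=:\pi_0^\eta(x).
\]
The kernel $u\mapsto N(z;\theta_0+u,\sigma_0^2)$ is bounded and continuous, so weak convergence also gives $\hat m_n(z|x)\to m(z|x)>0$. The map $(a,b)\mapsto\min\{af_0(z)/b,1\}$ is continuous at $b>0$. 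By the continuous mapping theorem,
\[
\hat{\ell}_{n,\eta}(z,x)\to \ell_\eta(z,x):=\min\{\pi_0^\eta(x)f_0(z)/m(z|x),1\}\quad\text{a.s.}
\]

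\emph{Step 2: interchanging limit and integral.} The integrand satisfies $|\hat\ell_{n,\eta}-\ell|\le 2$. Fubini plus dominated convergence, exactly as in Theorem \ref{thm:consistency}, then give
\[
\lim_n\iint|\hat\ell_{n,\eta}-\ell|\,\Pi(dx)m(z|x)dz=\iint|\ell_\eta-\ell|\,\Pi(dx)m(z|x)dz\quad\text{a.s.}
\]
Some care is needed with the a.s.\ quantifier. The null set in Step 1 may depend on $(z,x,\eta)$. To handle this, restrict $\eta$ to a countable sequence $\eta_k\downarrow0$. Then apply Fubini on the product of the sample space with $(z,x)$: for almost every $\omega$, pointwise convergence holds for $\Pi\otimes m$-a.e.\ $(z,x)$. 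If the weak-convergence statement is already almost sure uniformly over $x$, this step simplifies.

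\emph{Step 3: removing the bias.} Since $\ell\le1$ and $\pi_0^\eta\ge\pi_0$, we have $0\le \ell_\eta-\ell\le (\pi_0^\eta(x)-\pi_0(x))f_0(z)/m(z|x)$. Integrating against $m(z|x)dz$ gives
\[
\iint|\ell_\eta-\ell|\,\Pi(dx)m(z|x)dz\le\int(1-\pi_0(x))\int_{-\eta}^{\eta}\psi(u|x)\,du\,\Pi(dx).
\]
By (\textbf{A2}) the inner integral is at most $\kappa_2\int_{-\eta}^\eta\psi_0(u)du$, which tends to $0$ as $\eta\downarrow0$ because $\psi_0$ is integrable and $\{0\}$ has Lebesgue measure zero. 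The limit along $\eta_k$ extends to all $\eta\downarrow0$ by monotonicity of $\hat\Psi_n([-\eta,\eta]|x)$ in $\eta$, or else the statement is read along sequences.

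\emph{Main obstacle.} I expect the main obstacle to be the measure-theoretic bookkeeping in Step 2: making the almost-sure convergence hold simultaneously for a.e.\ $(z,x)$. The identification of the limit in Step 1 is routine by comparison.
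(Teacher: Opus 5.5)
Your proposal is correct and follows essentially the same route as the paper. For fixed $\eta$ you apply Portmanteau on the continuity set $[-\eta,\eta]$ together with the continuous mapping theorem, then pass to the limit in $n$ by bounded/dominated convergence, and finally let $\eta\downarrow 0$.

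The only real variation is in the last step. The paper finishes with a second dominated-convergence argument in $\eta$. You instead use $\ell\le\ell_\eta$ and $\int f_0=1$ to obtain the explicit bias bound $\iint|\ell_\eta-\ell|\,\Pi(dx)m(z|x)dz\le\int(1-\pi_0(x))\int_{-\eta}^{\eta}\psi(u|x)\,du\,\Pi(dx)$, which quantifies the error and, via (\textbf{A2}), makes it uniform in $x$. Your null-set bookkeeping is also more careful than the paper's, though restricting $\eta$ to a countable sequence is unnecessary: for fixed $x$, the weak-convergence event already covers every continuity set $[-\eta,\eta]$, $\eta>0$, simultaneously.
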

\begin{proof}
    Note that the function $t\rightarrow\min(t,1)$ is continuous. From this we may argue using Portmanteau, combined with the continuous mapping theorem, that:
$$\hat{\ell}_{n,\eta}(z,x)\xrightarrow{n\rightarrow\infty} \frac{(\pi_0(x)+(1-\pi_0(x))\int _{-\eta}^\eta \psi(u|x)du)f_0(z)}{m(z|x)}\wedge1=: \ell_\eta(z,x)$$
Noting that $|\hat{\ell}_{n,\eta}-\ell(z,x)|<2$, we can apply dominated convergence theorem to conclude:
$$\iint |\hat{\ell}_{n,\eta}(z,x) - \ell(z,x)|\Pi(dx)m(z|x)dz \xrightarrow{n\rightarrow\infty}\iint |\hat{\ell}_{\eta}(z,x) - \ell(z,x)|\Pi(dx)m(z|x)dz$$
Since $\psi$ is a density, $\int_{\eta}^\eta\psi(u|x)du\xrightarrow{\eta\downarrow0}0$. This, combined with continuity of $\min(t,1)$, implies that $\ell_\eta(z,x)\rightarrow \ell(z,x)$. A second application of the dominated convergence theorem indexed over $\eta\downarrow 0$ then completes the proof.
\end{proof}
A slightly stronger variation of Theorem \ref{thm:consistency} and Lemma \ref{lem:conservativeconsistency} would index $\eta$ and $h$ by $n$, and then achieve a rate-controlled, limit-in-$n$ consistency result---but pursuing this leads to the same bottleneck as that of naive PRx. As it stands, conservative PRx overestimates the local false discovery rate by inflating its null proportion, but the asymptotic error accrued by this vanishes with smaller choices of $\eta$. Hole-adjusted and conservative PRx therefore represent two compromises for solving PRx testing; hole-adjusted PRx requires us to choose $h\leq\delta$ for some unknown $\delta$ --- but once we successfully choose such an $h$, we have consistency to the exact oracle target. Conservative PRx, by permitting the appearance of a bias term, never leads to consistency towards the oracle, but also does not necessitate $\eta$ to fall beneath a particular threshold, or for (\textbf{A3}) to hold at all. 

%It should be noted here that we do not provide theoretical guarantees of frequentist FDR control. This would necessitate operating under the most adversarial violation of (\textbf{A1}) --- not only is the null proportion $\pi_0(x)$ not bounded away from $0$ and $1$, but it is precisely equal to either $0$ or $1$ everywhere. Under these conditions PRx consistency can no longer be guaranteed. Despite this, we find in Section \ref{sec:sims} that PRx testing maintains frequentist FDR control while also attaining competitive power, suggesting a high degree of robustness towards mis-specification of (\textbf{A1}). In fact, the power gap between PRx and competing methods noticeably widens in the frequentist setting. This is not surprising; it is a well-documented phenomenon that empirical Bayes multiple testing procedures perform well under both frequentist and Bayesian model specifications \citep{bogdan2008comparison}. 

We now proceed by deriving error rate control properties for our rejection rules. A standard calculation shows that the oracle SC thresholding rule is power-optimal among rejection rules that threshold the posterior FDR. For a review of this see the supplementary. Theorem \ref{thm:consistency}, combined with this result, establishes complementary parts of the oracle picture: The former shows that PRx consistently recovers the oracle local false discovery rate $\ell(z,x)$, while the latter shows that, if this oracle quantity were known, then it could be used to achieve power optimal posterior FDR control. The remaining question is how these properties carry over when the unknown $\ell(z,x)$ is replaced by its PRx estimate.

\begin{theorem}[Asymptotic Bayesian FDR Control]\label{thm:fdrcontrol}
    Define $\hat{\ell}^{(h,j)}(z,x)$ to be the PRx estimate obtained by initializing $\hat{\Psi}_0^{(h)}(\cdot|x) = \hat{\pi}_{0,0}^{(h)}(x)\delta_0 + (1-\hat{\pi}_{0,0}^{(h)}(x))\hat{\psi}_0^{(h)}(\cdot|x)$, with $\hat{\psi}_0^{(h)}(\cdot|x) = 0$ on $[-h,h]$ and $0<\hat{\pi}_{0,0}^{(h)}(x)<1$, for fold $j$. Define $\hat{\ell}^{(h)}(z_i,x_i) = \hat{\ell}^{(h,1)}(z_i,x_i)$ for $i\in I_2$ and $\hat{\ell}^{(h)}(z_i,x_i) = \hat{\ell}^{(h,2)}(z_i,x_i)$ for $i\in I_1$. Also define $\ell(z_i,x_i)$ to be the oracle local false discovery rate evaluated at the data point $(z_i,x_i)$. Let $R_n^{h}$ denote the set of rejected indices under the split-data rejection rule with estimates $\{\hat{\ell}^{(h)}(z_i,x_i)\}_{i=1}^n$. Let us assume (\textbf{A1}), (\textbf{A2}), (\textbf{A3}). Additionally, letting $G(t) = \text{Pr}(\ell(z,x)\leq t)$, let us assume that for some $t'<\alpha$, $G(t') >0$. Define $\mathcal{V}_n^h:=\sum_{i\in R_n^{h}} \mathbbm{1}\{{H_i=0}\}$. Then, for all $h\leq \delta$, the following holds:
    $$\limsup_{n\rightarrow\infty}\mathbb{E}\bigg[\frac{\mathcal{V}_n^h}{|R_n^h|\vee1}\bigg]\leq \alpha$$
\end{theorem}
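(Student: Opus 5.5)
The plan is to reduce the realized BFDR of the cross-fitted rule to the posterior FDR of the rejection set, and then control the gap between estimated and oracle local FDRs using Theorem \ref{thm:consistency}.

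\textbf{Step 1 (tower property).} Condition on fold $I_1$. The estimate $\hat{\ell}^{(h,1)}$ is then a fixed function. The pairs $(z_i,x_i,H_i)$ for $i\in I_2$ remain i.i.d. from the covariate-dependent two-groups model, so
\[
\mathbb{E}[\mathbbm{1}\{H_i=0\}\mid \mathcal{D}_n]=\ell(z_i,x_i).
\]
This holds for every $i$, because the whole data set $\mathcal{D}_n=\{(z_i,x_i)\}$ determines $R_n^h$ and the $H_i$ are conditionally independent given $\mathcal{D}_n$. Writing $R=R_n^h$, we get
\[
\mathbb{E}\Big[\frac{\mathcal{V}_n^h}{|R|\vee 1}\Big]=\mathbb{E}\Big[\frac{\sum_{i\in R}\ell(z_i,x_i)}{|R|\vee1}\Big].
\]
The SC rule guarantees $\sum_{i\in R}\hat{\ell}^{(h)}(z_i,x_i)\le \alpha|R|$. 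The integrand is therefore at most
\[
\alpha+\frac{\sum_{i\in R}|\hat\ell^{(h)}_i-\ell_i|}{|R|\vee1}\;\le\;\alpha+\frac{n^{-1}\sum_{i=1}^n|\hat\ell^{(h)}_i-\ell_i|}{(|R|/n)\vee n^{-1}}.
\]
Since the integrand is bounded by $1$, bounded convergence reduces the claim to two facts: the numerator tends to $0$ in probability, and $|R|/n$ is bounded below with probability tending to one.

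\textbf{Step 2 (numerator).} This is where cross-fitting is used. For $i\in I_2$, $(z_i,x_i)$ is independent of $\hat{\ell}^{(h,1)}$, so
\[
\mathbb{E}\big[|I_2|^{-1}\textstyle\sum_{i\in I_2}|\hat\ell^{(h,1)}(z_i,x_i)-\ell(z_i,x_i)|\;\big|\; I_1\text{-data}\big]=\iint|\hat\ell^{(h,1)}_{|I_1|}-\ell|\,\Pi(dx)m(z|x)dz.
\]
By Theorem \ref{thm:consistency}, applicable since $h\le\delta$ and $|I_1|\to\infty$, this integral tends to $0$ a.s. The conditional Markov inequality and bounded convergence then give convergence in probability of the fold-$I_2$ average. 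The same argument applies to fold $I_1$, and the two are combined using $|I_j|/n\to c_j>0$.

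\textbf{Step 3 (nonvanishing rejections).} This is the main obstacle. I will use the assumption $G(t')>0$ with $t'<\alpha$. Let $A=\{i:\ell_i\le t'\}$; the law of large numbers gives $|A|/n\to G(t')>0$. Step 2 gives $n^{-1}\sum|\hat\ell_i-\ell_i|\to0$. Markov's inequality then shows that the indices in $A$ with $\hat\ell_i>t'+\epsilon$ form a vanishing fraction of $n$. Take $\epsilon=(\alpha-t')/2$. With probability tending to one, at least $k_0=\lfloor nG(t')/2\rfloor$ of the $\hat\ell_i$ lie below $t'+\epsilon<\alpha$. The $k_0$ smallest estimates then have running mean below $\alpha$, so by the definition of $k^*$ we get $|R|\ge k^*\ge k_0$, i.e. $|R|/n\ge G(t')/2$ w.p.$\to1$. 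Ties in the rule $\hat\ell_i\le\hat\ell_{(k^*)}$ only enlarge $R$. The one point to check is that the SC bound $\sum_{i\in R}\hat\ell_i\le\alpha|R|$ still holds under the fixed tie-breaking rule.

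\textbf{Conclusion.} On the high-probability event from Step 3, the error term is at most $(2/G(t'))\,n^{-1}\sum|\hat\ell_i-\ell_i|\to0$ in probability. Off this event, the integrand is at most $1$ and the event has vanishing probability. Hence $\limsup_n\mathbb{E}[\mathcal{V}_n^h/(|R_n^h|\vee1)]\le\alpha$.
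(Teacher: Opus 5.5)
Your proposal is correct and follows essentially the same route as the paper: you bound the posterior FDR of the SC set by $\alpha$ plus the averaged $|\hat{\ell}_i-\ell_i|$ over rejections, restrict to the high-probability event of linearly many rejections, and make the error vanish via cross-fitting and Theorem \ref{thm:consistency}. The differences are minor: you actually prove the nonvanishing-rejection step that the paper defers to its supplement, and you work with convergence in probability plus bounded convergence instead of the paper's direct bound $\frac{1}{\epsilon n}\sum_i \mathbb{E}|\ell_i-\hat{\ell}_i|$. Your tie-breaking caveat applies equally to the paper, which uses $\sum_{i\in R}\hat{\ell}_i\le\alpha|R|$ without comment.
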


\begin{proof}
    Fix $h\leq \delta$, and drop the $h$ notation. Let $\mathcal{D}_n$ denote the full dataset $(z_i,x_i)_{i=1}^n$, and let $\mathcal{D}_1$ and $\mathcal{D}_2$ denote the data in fold $1$ and fold $2$, respectively. We can decompose a data-conditional expectation as follows:
    \begin{equation}
        \begin{split}
            \mathbb{E}\bigg[\frac{\mathcal{V}_n}{|R_n|\vee1} |\mathcal{D}_n\bigg] &= \frac{1}{|R_n|\vee 1}\sum_{i\in R_n}\hat{\ell}(z_i,x_i) + \frac{1}{|R_n|\vee 1}\sum_{i\in R_n}(\ell(z_i,x_i) - \hat{\ell}(z_i,x_i))\\
            &\leq \alpha + \frac{1}{|R_n|\vee1}\sum_{i\in R_n}|\ell(z_i,x_i) - \hat{\ell}(z_i,x_i)|
        \end{split}
    \end{equation}
    Define the event $A_n:=\{|R_n|\geq n\epsilon\}$ for some fixed $0<\epsilon<G(t')$. Under the assumption that $G(t')>0$, we have, for some $\epsilon < G(t')$, $\text{Pr}(|R_n| < n\epsilon)\rightarrow 0$ (see the supplementary). We can write the realized Bayesian FDR as the following:
    \begin{equation}
        \begin{split}
            \mathbb{E}\bigg[\frac{\mathcal{V}_n}{|R_n|\vee1}\bigg] = \underbrace{\mathbb{E}\bigg[\mathbb{E}\bigg[\frac{\mathcal{V}_n}{|R_n|\vee1} 1_{A_n}|\mathcal{D}_n\bigg]\bigg]}_{\text{Term (I)}} + \underbrace{\mathbb{E}\bigg[\mathbb{E}\bigg[\frac{\mathcal{V}_n}{|R_n|\vee1}1_{A_n^c }|\mathcal{D}_n\bigg]\bigg]}_{\text{Term (II)}}
        \end{split}
    \end{equation}
    Let us begin with Term (II). Notice that, since $\mathcal{V}_n/R_n\leq 1$, the expectation is bounded from above by $\text{Pr}(A_n^c)\rightarrow 0$. For Term (I) we take the following steps:
    \begin{equation}
        \begin{split}
           \mathbb{E}\bigg[\mathbb{E}\bigg[\frac{\mathcal{V}_n}{|R_n|\vee1} 1_{A_n}|\mathcal{D}_n\bigg]\bigg] &\leq \mathbb{E}\bigg[\bigg(\alpha + \frac{1}{|R_n|\vee1}\sum_{i\in R_n}|\ell(z_i,x_i) - \hat{\ell}(z_i,x_i)|\bigg)1_{A_n}\bigg]\\
           &\leq
           \alpha + \frac{1}{\epsilon n}\sum_{i=1}^n \mathbb{E}|\ell(z_i,x_i) - \hat{\ell}(z_i,x_i)|\\
           & \leq \alpha + \frac{1}{\epsilon}\bigg(\frac{1}{n_1}\sum_{i\in I_1}\mathbb{E}[|\ell(z_i,x_i) - \hat{\ell}^{(2)}(z_i,x_i)|] +\underbrace{(\cdots)}_{I_2\text{ term}}\bigg)
        \end{split}
    \end{equation}
We argue that $\frac{1}{n_1}\sum_{i\in I_1} \mathbb{E}|\ell(z_i,x_i) - \hat{\ell}^{(2)}(z_i,x_i)|\rightarrow 0$ almost surely. Notice that, conditional on $\mathcal{D}_2$, $|\ell(z_i,x_i) - \hat{\ell}^{(2)}(z_i,x_i)|; i\in I_1$ are i.i.d. random variables, and thus we have that $\frac{1}{n_1}\sum_{i\in I_1} \mathbb{E}|\ell(z_i,x_i) - \hat{\ell}^{(2)}(z_i,x_i)| = \mathbb{E}|\ell(z_k,x_k) - \hat{\ell}^{(2)}(z_k,x_k)|$ for any fixed $k\in I_1$. By Theorem \ref{thm:consistency} this approaches zero almost surely. The exact same argument applies to $I_2$, and the proof is complete.
\end{proof}

It is worth briefly discussing why the data-splitting mechanism of two-fold PRx is necessary to attain asymptotic BFDR control. The Bayesian FDR $\mathbb{E}[\mathcal{V}_n^h/\min(R_n^h,1)]$ can be bounded from above by $\alpha + \xi_n$, where the error term $\xi_n$ collapses almost surely to zero if $\hat{\ell}_n(z_i,x_i)\xrightarrow{L^1}\ell(z_i,x_i)$. Theorem $\ref{thm:consistency}$ is limited in this regard; consistency of $\hat{\ell}_n(z,x)$ to $\ell(z,x)$ does not imply consistency of $\hat{\ell}_n(z_i,x_i)$ to $\ell(z_i,x_i)$, since the data pair $(z_i,x_i)$ is used both as an evaluation point and within the PRx recursion internal to $\hat{\ell}_n(\cdot,\cdot)$. Data splitting resolves this issue by explicitly separating $(z_i,x_i)$ from the data used to train the local false discovery rate estimator. 

As seen in Section \ref{sec:sims}, two-fold PRx leads to stricter BFDR control at the cost of slightly lower power, while the SC thresholding rule tends to reject more aggressively at the cost of occasionally exceeding the $\alpha$ control level. If the practitioner wishes to use a method that provably controls asymptotic BFDR, then our suggestion is to use two-fold PRx. If, on the other hand, they wish to avoid the randomness induced by splitting the data, or want to work with only a single estimate of the local false discovery rate, then the SC rejection rule offers a simpler, more straightforward alternative. Regardless of the decision, Section \ref{sec:sims} suggests that both procedures exhibit excellent performance in comparison to already-existing methods. 

PRx derives its error rate control precisely from the consistency of its estimates of $\ell(z,x)$. It does not rely on masking or calibration mechanisms, which often lead to robustness of FDR control guarantees to model mis-specification \citep{chao2021adapt, leung2022zap}. PRx therefore trades model robustness for a greater emphasis on faithfully recovering the data generating process. Despite this, the numerical studies in Section \ref{sec:sims} show that this trade-off can be highly beneficial even in data regimes that lie outside the assumptions of the formal theory.

\section{Discussion}\label{sec:discussion}

The central aim of PRx testing is to recover the entire local false discovery rate surface $\ell(z,x)$. Recovery of this quantity requires execution of two difficult statistical tasks: conditional density estimation, and recovery of a latent null proportion. PRx offers an elegant solution to this; Theorem \ref{thm:consistency} shows that hole-adjusted PRx consistently recovers the oracle quantity $\ell(z,x)$ under a separation condition on the latent mixing distribution. The work of \cite{sun2007oracle} then describes what is gained if this oracle quantity were known: the SC rule achieves power optimality among rules controlling posterior FDR. Theorem \ref{thm:fdrcontrol} then returns to the feasible problem and shows that replacing $\ell(z_i,x_i)$ with its corresponding cross-fitted PRx estimates retains asymptotic BFDR control. 

%Theorem \ref{thm:oracle-pathwise} then describes what is gained if this oracle quantity were known: the SC rule not only controls posterior FDP by construction, but its realized FDP is asymptotically bounded by the nominal level almost surely.

%Thus the theory separates naturally
%into recovery of the oracle score, properties of the oracle decision
%rule, and transfer of error-rate control to an estimated rule.

Such an emphasis on estimating $\ell(z,x)$ distinguishes PRx from several closely related covariate-adaptive procedures. In ZAP and AdaPT-GMM, the fitted model is used to improve the adaptation of the rejection rule, while frequentist FDR validity is protected by a separate calibration or masking argument. PRx instead treats the conditional mixing distribution as the inferential target, and its asymptotic guarantees are built on its consistent recovery. These represent two different routes to covariate-assisted multiple testing: one insulates error-rate validity from potential misspecification of the fitted working model, while the other uses nonparametric estimation within a rich class of densities to recover the underlying conditional mixture and directly leverage the estimated local false discovery rates. Neither approach should be expected to dominate universally. Our numerical results suggest that the latter strategy can yield substantial power gains when the covariates strongly influence the prevalence, direction, or shape of the non-null distribution.

Estimating the full $\ell(z,x)$ surface may also be useful for reasons that are not captured by power alone. When signals occupy coherent regions of the covariate space, a localized testing procedure can produce rejection sets that inherit this geometry, forming connected or structured regions rather than collections of isolated discoveries. Such structure can be scientifically meaningful when the covariates themselves have a spatial or geometric interpretation, as in imaging or other related settings. PRx does not explicitly impose any topology or connectivity on the rejection set; rather, such structure may emerge from learning how the conditional distribution of the test statistic varies over the covariate space. Developing formal guarantees for recovery of these geometric features would be an interesting direction for future work.

Several limitations of the present theory also point to natural extensions. Exact recovery of the point mass at zero by hole-adjusted PRx currently relies on the separation assumption (\textbf{A3}). The conservative construction in Lemma \ref{lem:conservativeconsistency} shows that this assumption can be avoided at the price of a shrinking-neighborhood limit, and our simulations indicate good performance even when substantial slab mass lies near zero. However, exact guarantees for naive PRx without a separation condition remain unattained. Likewise, the fixed-label simulations deliberately move outside the two-groups formulation used in our theory and should be viewed as evidence of empirical robustness rather than as a claim of uniform fixed-configuration FDR control. Finally, extending the theory to cover data-selected PRML tuning, empirical-null estimation, and dependence among the test statistics would bring the formal guarantees closer to real applications.

\bibliographystyle{abbrvnat}
\bibliography{mybib}
\end{document}